\title{Stochastic Nondeterminism and Effectivity Functions}
\address{Chair for Software Technology\\Technische Universit\"at Dortmund\\Joseph-von-Fraunhofer-Str.\ 23\\D-44227. Germany}
\email{ernst-erich.doberkat@udo.edu}
\curraddr{Math ++ Software, Lewackerstr. 6 b, D-44879 Bochum, Germany}
\email{eed@doberkat.de}
\address{CIEM --- Facultad de Matem\'atica, Astronom\'{\i}a y F\'{\i}sica 
    (Fa.M.A.F.)\\Universidad Nacional de C\'ordoba --- Ciudad Universitaria\\C\'ordoba 5000. Argentina.}
\email{sterraf@famaf.unc.edu.ar}
\author{Ernst-Erich Doberkat \quad\quad Pedro S\'{a}nchez Terraf}
\thanks{This article has been accepted for publication in the Journal
  of Logic and Computation Published by Oxford University Press.\\ 
  The
    second author was partially
    supported by CONICET, ANPCyT project PICT 2012-1823, SeCyT-UNC project 05/B284, and by DFG grant DO 263/13-1.}
\keywords{stochastic effectivity function, non-deterministic labelled Markov process,
 state bisimilarity, coalgebra. \emph{MSC 2010:} 03B70; 03E15; 28A05. \emph{ACM class:} F.4.1; F.1.2.}
\def\cal{\mathcal}
\def\paragraph{\subsubsection*}
\newcommand{\Inv}[2]{\ensuremath{{\mathcal INV}\left(#1, #2\right)}}
\newcommand{\Klasse}[2]{\left[#1\right]_{#2}}
\newcommand{\Faktor}[2]{{#1}/{#2}}
\newcommand{\fMap}[1]{\eta_{#1}}
\newcommand{\Bild}[2]{{#1}\left[#2\right]}
\newcommand{\InvBild}[2]{\Bild{#1^{-1}}{#2}}
\newcommand{\Kern}[1]{\mathsf{ker}\left(#1\right)}
\newcommand{\Folge}[1]{(#1_n)_{n \in \Nat}}
\newcommand{\SubProb}[1]{\spaceFont{S}\left(#1\right)}
\newcommand{\SubProbSenza}{\spaceFont{S}}
\newcommand{\PowerSet}[1]{\ensuremath{\mathcal{P}{\left(#1\right)}}}
\newcommand{\PowerSenza}{\ensuremath{\mathcal{P}}}
\newcommand{\Borel}[1]{\ensuremath{{\mathcal B}(#1)}}
\newcommand{\Trans}{\rightsquigarrow}
\newtheorem{definition}{Definition}[section]
\newcommand{\BeginDefinition}[1]{
  \begin{definition}\label{#1}
}
\newcommand{\EndDefinition}{\end{definition}}
\newtheorem{example}[definition]{Example}
\newcommand{\BeginExample}[1]{
  \begin{example}\label{#1}\rm
}
\newcommand{\EndExample}{--- \end{example}}
\newtheorem{observation}[definition]{Observation}
\newcommand{\BeginObservation}[1]{
  \begin{observation}\label{#1}\rm
}
\newcommand{\EndObservation}{--- \end{observation}}
\newtheorem{theorem}[definition]{Theorem}
\newcommand{\BeginTheorem}[1]{
  \begin{theorem}\label{#1}
}
\newcommand{\EndTheorem}{\end{theorem}}
\newtheorem{corollary}[definition]{Corollary}
\newcommand{\BeginCorollary}[1]{
  \begin{corollary}\label{#1}
}
\newtheorem{proposition}[definition]{Proposition}
\newcommand{\BeginProposition}[1]{
  \begin{proposition}\label{#1}
}
\newcommand{\EndProposition}{\end{proposition}}
\newcommand{\EndCorollary}{\end{corollary}}
\newtheorem{lemma}[definition]{Lemma}
\newcommand{\BeginLemma}[1]{
  \begin{lemma}\label{#1}
}
\newcommand{\EndLemma}{\end{lemma}}
\newtheorem{claim}{Claim}
\newcommand{\BeginClaim}[1]{
  \begin{claim}\label{#1}
}
\newcommand{\EndClaim}{\end{claim}}
\newcommand{\BeginProof}{\begin{proof}}
\newcommand{\EndProof}{\end{proof}}
\newenvironment{remark}{\textbf{Remark:\ }}{}
\newcommand{\BeginRemark}{\begin{remark}}
\newcommand{\EndRemark}{\QED\end{remark}}
\newcommand{\QED}{
\ensuremath{\dashv}
}
\newcommand{\Real}{\mathbb{R}}
\newcommand{\pReal}{\mathbb{R}_{+}}
\newcommand{\Nat}{\mathbb{N}}
\newcommand{\Rational}{\mathbb{Q}}
\newcommand{\rel}[1]{\mathrel{\mathcal{R}{\left(#1\right)}}}
\newcommand{\sem}[1]{\llbracket #1 \rrbracket}
\newcommand{\bisim}{\sim}
\newcommand{\sbisim}{\bisim_{\rm s}}
\newcommand{\ebisim}{\bisim_{\rm e}}
\newcommand{\nlmp}[1]{\mathbf{#1}}
\renewcommand{\emptyset}{\varnothing}
\newcommand{\et}{\mathrel{\&}}
\newcommand{\dia}{\lozenge}
\renewcommand{\Box}{\mathop{\oblong\mspace{-1mu}}}
\newcommand{\calS}{\mathcal{S}}
\newcommand{\calU}{\Xi}
\renewcommand{\calU}{{\cal D}}
\newcommand{\Pow}{\PowerSenza}
\newcommand{\comp}{\mathsf{c}}
\newcommand{\B}{\Borel{}}
\newcommand{\N}{\mathbb{N}}
\newcommand{\<}{\langle}
\renewcommand{\>}{\rangle}
\renewcommand{\keywords}[1]{{\renewcommand{\thefootnote}{\relax}\footnotetext{\emph{Keywords:}
    #1}}}
\newcommand{\ent}{\Rightarrow}
\newcommand{\tne}{\Leftarrow}
\newcommand{\sii}{\Leftrightarrow}
\renewcommand{\phi}{\varphi}
\newcommand{\Lda}{\Lambda}
\renewcommand{\Lda}{{\cal A}} 
\newcommand{\lda}{\lambda}
\newcommand{\sig}{\ensuremath{\sigma}}
\newcommand{\fil}{\mathop{\mathfrak{F}}}
\renewcommand{\fil}{\mathfrak{f}} 
\renewcommand{\fil}[1]{@ #1} 
\renewcommand{\fil}[1]{\ensuremath{\lfloor{#1}\rfloor}}
\newcommand{\st}{\mid}
\definecolor{lightblue}{RGB}{224,224,255}
\definecolor{lightred}{RGB}{255,224,224}
\definecolor{lightgreen}{RGB}{224,255,224}
\definecolor{lightyellow}{RGB}{255,255,224}
\definecolor{lightpurple}{RGB}{255,224,255}
\definecolor{darkerred}{RGB}{64,0,0}
\definecolor{darkred}{RGB}{128,0,0}
\definecolor{darkblue}{RGB}{0,0,128}
\definecolor{darkgreen}{RGB}{0,128,0}
\definecolor{darkpurple}{RGB}{128,0,128}
\newcommand{\colorpar}[3]{\colorbox{#1}{\parbox{#2}{#3}}}
\newcommand{\marginremark}[4]{\marginpar{\colorpar{#2}{\linewidth}{\color{#1}\tiny{[#3]~ #4}}}}
\def\THICKhrulefill{\leavevmode \leaders \hrule height 5pt\hfill \kern \z@}
\newcommand{\highlightedremark}[4]{\begin{center}\fcolorbox{#1}{#2}{\begin{minipage}{.98\linewidth}\color{#1}\textbf{\THICKhrulefill[ #3 ]\THICKhrulefill}\par\noindent#4\end{minipage}}\end{center}}
\def\@remk#1#2#3#4{\marginremark{#3}{#4}{#1}{#2}}
\def\remarkPST#1{\@remk{PST}{#1}{darkred}{lightred}}
\def\remarkEED#1{\@remk{EED}{#1}{darkblue}{lightblue}}
\def\@hrmk#1#2#3#4{\highlightedremark{#3}{#4}{#1}{#2}}
\def\hrmkEED#1{\@hrmk{EED}{#1}{darkgreen}{lightblue}}
\def\hrmkPST#1{\@hrmk{PST}{#1}{darkpurple}{lightred}}
\newtheorem{prop}[definition]{Proposition}
\theoremstyle{remark}
\newtheorem*{ack}{Acknowlegdements}
\newcommand{\sigalg}{\mathscr{S}} 
\renewcommand{\B}{\mathcal{B}}          
\newcommand{\subp}{\mathbb{S}}          
\newcommand{\hit}{\mathscr{H}}
\newcommand{\hitt}{\mathsf{H}}
\newcommand{\cones}{\VauSenza}         
\newcommand{\ka}{\kappa}
\newcommand{\EF}{\mathscr{E\negthickspace{F}}\!}
\newcommand{\NK}{\mathscr{N\negthickspace{K}}\!}          
\newcommand{\two}{{2\mathcal{L}}}       
\newcommand{\eqlog}{\approx}
\newcommand{\weak}{w}
\newcommand{\calC}{\mathcal{C}} 
\newcommand{\bR}{\mathrel{\bar R}}
\newcommand{\iotac}{\iota_\calC}
\renewcommand{\xi}{X}                   
\def\QED{\qed}
\newcommand{\isEquiv}[3]{\ensuremath{{#1}\ {#3}\ {#2}}}
\newcommand{\basS}[3]{\bas{#2}{#3 \ast #1}}
\renewcommand{\basS}[3]{\bas{#2}{#3\,\pmb{\mid}\,#1}}
\renewcommand{\basS}[3]{{\boldsymbol{\beta}}_{#1}(#2, #3)}
\renewcommand{\Inv}[2]{\ensuremath{{\Sigma}(#1, #2)}}
\renewcommand{\Inv}[2]{\ensuremath{{\Sigma}_{#1}(#2)}}
\renewcommand{\SubProbSenza}{\mathbb{S}}
\renewcommand{\SubProb}[1]{\SubProbSenza(#1)}
\newcommand{\bSS}[1]{{#1}\otimes\Borel{[0, 1]}}
\def\funktorFont{\mathbb}
\def\VauSenza{\ensuremath{\funktorFont{V}}}
\newcommand{\Vau}[1]{\ensuremath{\ensuremath{\VauSenza(#1)}}}
\def\eTrans{\pmb{\twoheadrightarrow}}
\def\doteq{:=}
\def\Xf{\ensuremath{\Sigma_{f}}}
\def\ul#1{#1}
\def\MMP#1{}
\begin{document}

\begin{abstract}
This paper investigates stochastic nondeterminism by relating
nondeterministic labelled Markov processes and stochastic effectivity
functions to each other; the underlying state spaces are
continuous. Both generalizations to labelled Markov transition systems
have been proposed recently with differing intentions. It turns out
that they display surprising similarities and interesting differences,
as we will demonstrate in this paper.
\end{abstract}

\maketitle
\section{Introduction}

Nondeterministic labelled Markov processes (henceforth abbreviated as
NLMPs) propose a model for stochastic non-determinism in which a state
is assigned a measurable set of probability distributions as a way of
selecting non-deterministically the distribution of the next state in
a transition system. As outlined in~\cite{Terraf-Bisim-MSCS,
  DWTC09:qest}, this serves as a model for internal non-determinism; a
crucial point here is the question of measurability for the underlying
transition law. To give the idea, at the heart of a NLMP over a
measurable state space $S$ lies a family of functions $(\ka_{a})_{a\in
  A}$ indexed by actions from a set $A$, each of which assigns to a
state $s$ a set $\ka_{a}(s)$ of probabilities over $S$, modeling the
set of distributions which are possible after action $a$ in state
$s$. To obtain meaningful insights into the behavior of such a system,
some assumptions on measurability of $s \mapsto \ka_{a}(s)$ should be
imposed. The measurable structure is given by the hit
$\sigma$-algebra, a construction very similar to the hyperspace
constructions in topology~\cite{Michael, Kechris}. This construction
forms the basis for investigations into system behavior, most notably
into different variants of bisimulations. For example, it was
shown that the negation-free logic proposed in~\cite{Parma+Segala} is
a suitable logic for investigating event
bisimulation~\cite[Theorem~4.5]{Terraf-Bisim-MSCS}. This demonstrates
that NLMPs are an adequate tool for representing and investigating
stochastic non-determinism.

In a parallel development, stochastic effectivity functions have been
proposed for providing a stochastic interpretation of game logic,
continuing the line of research which has been initiated by
Parikh~\cite{Parikh-Games1985} and later continued
in~\cite{Pauly-Parikh} one one hand, and the stochastic interpretation
of propositional dynamic logic, a fragment of game logic, through
Kripke models on the other hand~\cite{EED-PDL-TR}. Parikh had observed
that neighborhood models are a suitable tool for the variant of modal
logics which he had proposed as game logic. A neighborhood model over
a set $W$ of worlds is essentially represented by an effectivity
function over $W$, i.e., a function $F$ which maps $W$ to the set of
all upper closed subsets of the power set $\PowerSet{W}$, so that for
$F(w)\subseteq \PowerSet{W}$ has the property that $A\in F(w)$
and $A\subseteq A'$ implies $A'\in F(w)$ all $w\in W$. Assume that
$w\in W$ represents the state of some game. Let $A\in F(w)$
indicate that, when making a move, the player has a strategy to reach
a state in $A$. Clearly $F(w)$ is upper closed. Now
assume that these games can be
combined in various ways, e.g., by composing them sequentially, by
choosing one rather than the other, or by iterating them a finite but
indefinite number of times. Having only one player is probably not too
entertaining, so it is assumed that the game has two players, which
are henceforth called \emph{Angel} and \emph{Demon}, and which are
assumed to take turns.  It is assumed that the entire game is
\emph{determined}, indicating that exactly one of both players has a
winning strategy~\cite[Chapter 33]{Jech}; for symmetry, each game has
a dual game which assumes that Angel and Demon change
r\^oles. Determinacy then implies that if Angel does not have a
winning strategy for a game, Demon has one for its dual, which means
that if Angel does not have a strategy for achieving a state in a set
$A$, Demon has a strategy for achieving a state in $W\setminus A$, and
vice versa. This implies that we have only to cater for Angel's
movements; assume the latter's effectivity function for a game is
given by $F$, then $w \mapsto\{W\setminus A \mid A\not\in F(w)\}$ will
take care of Demon's movements for this game. This scenario will
motivate part of our approach, so we will return to it from time to
time.

A probabilistic interpretation of game logic assumes that the outcome
of a game is modeled in terms of probability distributions over the
set $W$ of states or worlds, which means that we assign to each world
$w$ an upward closed set $F(w)$ of probability measures on $W$ for
modeling a game, and $A\in F(w)$ is intended to represent the
assumption that $A$ is a possible set of distributions which Angel
can achieve upon playing that game. But since we are in the realm of
probabilities, we have to make sure that the probabilities are defined
at all, so we have to assume also here the set of worlds $W$ being
endowed with a measurable structure, each element of $F(w)$ being a
measurable set of probabilities as well. So we set up the scenario of
assigning to each world $w$ a set $F(w)$ of measurable sets of
probabilities. With this basic assumptions we are able to model some,
but not all, constructions in games logic, the important operation of
composing games is not among them, however. For modeling this, we want
to be able to compose stochastic effectivity functions so that the
result can be used for representing sequential compositions; this
operation is technically a bit involved, as it requires an additional
property on these functions which we call
\emph{t-measurability}~\cite{EED-GameLogic-TR,EED-Alg-Prop_effFncts}. It
states roughly that we obtain a measurable function when taking
quantitative measurements into account; this will then enable us to
compose effectivity functions through integration.  It turns out that
t-measurability can be characterized topologically, that it is a
powerful concept for the purposes indicated here, but that it marks on
the other hand one of the boundaries between NLMPs and effectivity
functions.

Both NLMPs and stochastic effectivity functions are generalizations of
Markov kernels, a.k.a.\ stochastic relations.  It is not so
straightforward, however, to see a relationship between NLMPs and
effectivity functions, and here things start to become interesting. We
investigate the interplay between both by looking into conditions
under which they generate each other. For example, we devise a
mechanism which renders an effectivity function from an NLMP --- most
of the time, but when it appears to be most obvious, it does not
work. We show that in the finitely supported case there are some very
close connections between them in Polish spaces; this is established
through a selection argument. We review bisimulations for NLMPs
and translate these results to effectivity functions. On the other
hand, morphisms for effectivity functions give rise to an
investigation of morphisms for NLMPs.  We show that porting a concept
from one side to the other furthers insight into both ways of modeling
stochastic nondeterminism.

\paragraph{Outline} 
In the next section, we recall some basic material concerning
measurable spaces and subprobability measures. Highlighted concepts
are the \emph{hit $\sigma$-algebra} and the space of upper closed
sets. In Section~\ref{sec:main-char-nond-kern} we present the two
models of stochastic behavior which are studied in this work, namely,
nondeterministic kernels and effectivity functions; they have been
used to construct \emph{nondeterministic
  Labelled Markov Processes} and \emph{stochastic game models}. 

The investigation of  bisimulation is taken up in
Section~\ref{sec:bisimulations}, and serves as a framework for investigating
several issues. In Section~\ref{sec:representability}
the relation between non-determinism and effectivity is studied from a
measurability point of view, and also the interaction of
morphisms and relational bisimulations. A
logic for characterizing bisimilarity on finitary effectivity
functions is proposed in Section \ref{sec:two-level-logic}. Finally, a
coalgebraic perspective on bisimulation and behavioral equivalence is
developed in Section~\ref{sec:coalgebraic-approach}, where the notion
of \emph{subsystem} is developed and plays a central r\^ole.

Section~\ref{sec:conclusions--further-work} contains some further
directions and offers concluding remarks.

\section{$\sigma$-Algebras and All That}
\label{sec:sigm-all-that}
The reader is briefly reminded of some notions and constructions from measure theory,
including the famous $\pi$-$\lambda$-Theorem and Choquet's
representation of integrals as areas; these two tools are used all
over the paper. We also introduce invariant sets for an equivalence
relation together with the corresponding $\sigma$-algebra.

\subsection{Measurability}
First we fix some notations. A measurable space $(S, {\cal S})$ is a set $S$ with a
$\sigma$-algebra ${\cal S}$, i.e., ${\cal S}\subseteq\PowerSet{S}$ is
a Boolean algebra which is closed under countable unions. Here
$\PowerSet{S}$ is the power set of $S$. Given ${\cal S}_{0}\subseteq\PowerSet{S}$,
denote by 
\begin{equation*}
  \sigma({\cal S}_{0}) := \bigcap\{{\cal T} \mid {\cal
    S}_{0}\subseteq{\cal T}, {\cal T}\text{ is a $\sigma$-algebra}\}
\end{equation*}
the smallest $\sigma$-algebra containing ${\cal S}_{0}$ (the set for
which the intersection is constructed is not empty, since it contains
$\PowerSet{S}$). If $S$ is a topological space with topology $\tau$,
then the elements of $\sigma(\tau)$ are called the \emph{Borel sets}
of $S$; the $\sigma$-algebra $\sigma(\tau)$ is usually denoted by
$\Borel{S}$. For a measurable space  $(S, {\cal S})$  and $A\subseteq
S$ we define ${\cal S}\cap A := \{B\cap A \mid B\in{\cal S}\}$ as the
\emph{trace} of ${\cal S}$ on $A$, so that $(A, {\cal S}\cap A)$ becomes a
measurable space in its own right; note that we do not require $A$ to be a
measurable set. 

Given two measurable spaces $(S, {\cal S})$ and $(T,
{\cal T})$, the product space $(S\times T, {\cal S}\otimes{\cal T})$
has the Cartesian product $S\times T$ as a carrier set, the product
$\sigma$-algebra 
\begin{equation*}
  {\cal S}\otimes{\cal T} := \sigma(\{A\times B \mid A\in{\cal S}, B\in{\cal T}\})
\end{equation*}
is the smallest $\sigma$-algebra on $S\times T$ which contains all
rectangles $A\times B$ with $A\in {\cal S}$ and $B\in{\cal
  T}$. Define for $E\subseteq S\times T$ 
\begin{align*}
  E^{s} & := \{t\in T\mid \langle s, t\rangle\in E\}&&\text{ vertical
    cut,}\\
E_{t} & := \{s\in S \mid \langle s, t\rangle \in E\}&&\text{ horizontal
    cut\footnotemark,}
\end{align*}
\footnotetext{The
  notation of indicating the horizontal cut through an index conflicts
  with indexing, but it is customary, so we will be careful to make
  sure which meaning we have in mind.}
then $E^{s}\in{\cal T}$ for all $s\in S$, and $E_{t}\in{\cal S}$ for
all $t\in T$, provided $E\in{\cal S}\otimes{\cal T}$. The converse
does not hold, that is, a set having all of its cuts measurable is not
necessarily measurable~\cite[Exercise 21.20]{Hewitt-Stromberg}.

Dually, the \emph{coproduct} $(S, {\cal S})\oplus (T, {\cal T})$ of the
measurable spaces has as a carrier set the disjoint union $S\uplus T$ of the
carrier sets $S$ and $T$, and as a $\sigma$-algebra 
\begin{equation*}
  {\cal S}\oplus{\cal T} := \{C\subseteq S\uplus T \mid C\cap S\in
  {\cal S}\text{ and }C\cap T\in{\cal T}\}.
\end{equation*}
The coproduct of measurable spaces is sometimes also called their \emph{sum}. 

If $(S, \tau)$ and $(T, \theta)$ are topological spaces, then the
Borel sets $\Borel{\tau\times\theta}$ of the product topology may
properly contain the product $\Borel{\tau}\otimes\Borel{\theta}$. If,
however, both spaces are Hausdorff and $\theta$ has a countable basis,
then $\Borel{\tau\times\theta} =
\Borel{\tau}\otimes\Borel{\theta}$~\cite[Lemma 6.4.2]{Bogachev}. In
particular, the Borel sets of the product of two Polish spaces are
generated by Cartesian products of Borel sets from the components (a
\emph{Polish space} is a topological space which has a countable base
and for which a complete metric exists). The same applies to analytic
spaces (an \emph{analytic space} is a separable metric space which is
the image of a continuous map between Polish spaces). This is so
because the topology of these spaces is also countably
generated~\cite[Corollary 2.97]{EED-Meas}. A \emph{standard Borel
  space} is a measurable space the $\sigma$-algebra of which is
generated by a Polish topology.

In summary, the observation on products mentioned above suggests that
we have to exercise particular care when working with the product of
measurable spaces, which carry a topological structure as well.

Given the measurable spaces $(S, {\cal S})$ and $(T, {\cal T})$, a map
$f: S \to T$ is said to be \emph{${\cal S}$-${\cal T}$-measurable} iff
$\InvBild{f}{D}\in{\cal S}$ for all $D\in {\cal T}$. Call the
measurable map $f: (S, {\cal S}) \to (T, {\cal T})$ \emph{final} iff
${\cal T}$ is the largest $\sigma$-algebra $\mathcal{C}$ on $T$ such
that $\InvBild{f}{\mathcal{C}} := \{\InvBild{f}{C}\mid C\in {\cal
  C}\}\subseteq {\cal S}$ holds, so that $ {\cal T}= \{B \subseteq T
\mid \InvBild{f}{B}\in{\cal S}\}. $ Hence we may conclude from
$\InvBild{f}{B}\in{\cal S}$ that $B \in {\cal T}$, if $f$ is onto.  An
equivalent formulation for finality of $f$ is that a map $g: T \to U$
is ${\cal T}$-${\cal U}$-measurable if and only if $g\circ f: S\to U$
is ${\cal S}$-${\cal U}$-measurable, whenever $(U, {\cal U})$ is a
measurable space. Measurability of real valued maps always refers to
the Borel sets on the reals, hence $f: S\to \Real$ is measurable iff
$\{s\in S\mid f(s)\bowtie q\}\in{\cal S}$ for each rational number
$q$, with $\bowtie$ as one of the relations $\leq, <, \geq, >$.

{
\def\Lda{{\cal A}}
\def\xi{C}
\def\zeta{D}
\BeginDefinition{def:hit-salg}

  Let $\Lda$ be some family of sets. The
  \emph{hit \sig-algebra} $\hit(\Lda)$ is the least
  $\sigma$-algebra on $\Lda$ containing all sets
  $\hitt_{\xi} := 
    \{ \zeta\in\Lda :
                   \zeta\cap\xi \neq\emptyset \}$
  with $\xi\in\Lda$.
\EndDefinition
This $\sigma$-algebra will be used when we formulate the measurable
structure underlying nondeterministic labelled Markov processes.
This is an easy criterion for hit-measurability.

\BeginLemma{l:H-meas_subseteq}
Let $(S, {\cal S})$ be a measurable space, $\Lda$ a \sig-algebra on a set $T$, and a map $\ka:S\to
\Lda$. Then $\ka$ is ${\cal S}$-$\hit(\Lda)$ measurable iff for every
$\xi\in\Lda$, $\{s\in S : \ka(s)\subseteq \xi\} \in {\cal S}$.
\EndLemma

\begin{proof}
Just note that $\{s\in S : \ka(s)\subseteq \xi\} = S\setminus\{s\in S : \ka(s)\cap
T\setminus\xi \neq \emptyset\}$.
\end{proof}
}
\medskip

Some notation concerning binary relations will be needed. Let $R$ a
binary relation over $S$.  A set $Q$ is \emph{$R$-closed} if $x\in Q$
and $x\mathrel{R} y$ imply $y\in Q$; this is the appropriate
generalization of invariance for equivalence
relations. $\Inv{R}{{\cal S}}$, $\Inv{R}{S}$ or $\Sigma_R$ will
denote the \sig-algebra of $R$-closed sets in ${\cal S}$. If $\mu,
\mu'$ are measures defined on $\calS$, we write $\mu \bR\mu'$ if they
coincide over $\Inv{R}{S}$. Lastly, let $\calU$ be a subset of
$\Pow(S)$, the powerset of $S$. The relation $\rel{\calU}$ is given
by:
\begin{equation}\label{eq:rel-from-sig-algebra}
\langle s,t\rangle\in{\rel{\calU}} \quad \iff \quad \forall Q\in \calU: s\in Q \sii
t\in Q.
\end{equation}
Then the following observation, which is proved in~\cite[Lemma 3.1.6]{Srivastava}, is sometimes helpful

\BeginLemma{equiv-generated}
$\rel{\calU} = \rel{\sigma(\calU)}.$
\QED
\EndLemma

If $R$ is an equivalence relation, then an $R$-closed set is the union
of equivalence classes; we call an $R$-closed set in this case
\emph{$R$-invariant}.  As usual,
\begin{equation*}
\Kern{f} := \{\langle s, s'\rangle \mid f(s) = f(s')\}  
\end{equation*}
is the \emph{kernel of $f$}.

We know that in general the image of a Borel set is not Borel; for
surjective maps and invariant Borel sets, however, we can establish
measurability using Lusin's classic Separation Theorem. 
\BeginLemma{is-borel}
Let $X, Y$ be Polish, $f: X\to Y$ Borel measurable and onto, and assume $A\in\Inv{\Kern{f}}{\Borel{X}}$. Then
$\Bild{f}{A}$ is a Borel set in $Y$. 
\EndLemma

\BeginProof
See~\cite[Corollary 2.6]{EED-Alg-Prop_effFncts}.
\EndProof

These are two easy consequences. 

\BeginCorollary{is-borel-conseq}
Let $X, Y$ be Polish, $f: X\to Y$ measurable and onto, then 
$
\Inv{\Kern{f}}{\Borel{X}} = \{\InvBild{f}{B} \mid  B\in\Borel{Y}\}.
$
\EndCorollary

\BeginProof
Since $f$ is Borel measurable, and since $\InvBild{f}{B}$ is an
$\Kern{f}$-invariant Borel set of $X$, we obtain
$\Inv{\Kern{f}}{\Borel{X}}\supseteq \{\InvBild{f}{B} \mid
B\in\Borel{Y}\}$. On the other hand, if $A\subseteq X$ is
$f$-invariant, we have $\InvBild{f}{\Bild{f}{A}} = A$. Since
$\Bild{f}{A}$ is a Borel set by Lemma~\ref{is-borel}, the other
inclusion follows.
\EndProof

Given a map $f: X\to Y$, we will sometimes need to extend this to a map $f\times id_{[0, 1]}$, which sends $\langle x, q\rangle$ to $\langle f(x), q\rangle$, and we will have to know something about the kernel of this map, for which obviously 
\begin{equation*}
\isEquiv{\langle x, q\rangle}{\langle x', q'\rangle}{\Kern{f\times id_{[0, 1]}}}
\text{ iff }
f(x) = f(x') \text{ and } q = q'
\end{equation*}
holds. The $f\times id_{[0, 1]}$-invariant sets are described here.

\BeginLemma{is-borel-conseq-lem}
Let $X$ and $Y$ be Polish and $f: X\to Y$ be measurable and onto. Then 
\begin{equation*}
\Inv{\Kern{f\times id_{[0, 1]}}}{\Borel{X\otimes[0, 1]}} = \Inv{\Kern{f}}{\Borel{X}}\otimes \Borel{[0, 1]}.
\end{equation*}
\EndLemma

\BeginProof
See~\cite[Lemma 2.10]{EED-Alg-Prop_effFncts}.
\EndProof

We will now explore some structural properties which are induced by
surjective and measurable maps. It is somewhat surprising that this
induces an isomorphism on the set of all subprobabilities. Looking at the proof, it is even more
surprising that this is a consequence of Lusin's 
Theorem. 
The next lemma is a step towards the isomorphism.

\BeginLemma{gen-1}
Let $X$ and $Y$ be Polish, $f: X\to Y$ onto and measurable. Then the
$\sigma$-algebras $\Inv{\Kern{f}}{\Borel{X}}$ and $\Borel{Y}$ are isomorphic as
Boolean $\sigma$-algebras.
\EndLemma

\BeginProof
By Lemma~\ref{is-borel}, $\Bild{f}{D}\in \Borel{Y}$, whenever $D\in
\Inv{\Kern{f}}{\Borel{X}}$. 
Define
\begin{equation*}
\psi: 
  \begin{cases}
    \Borel{Y} & \to \Inv{\Kern{f}}{\Borel{X}}\\
D & \mapsto \InvBild{f}{D}.
  \end{cases}
\end{equation*}
Then $\psi(D)\in \Inv{\Kern{f}}{\Borel{X}}$ on account of $f$ being
measurable, and $\psi$ is injective because $f$ is onto. Now $C =
\InvBild{f}{\Bild{f}{C}}$ for $C\in \Inv{\Kern{f}}{\Borel{X}}$, and
$\Bild{f}{C}\in \Borel{Y}$, hence $\psi$ is onto as well. We know that
$\Bild{f}{\InvBild{f}{D}} = D$, since $f$ is onto, hence $f:
\Inv{\Kern{f}}{\Borel{X}}\to \Borel{Y}$ and $\psi: \Borel{Y} \to
\Inv{\Kern{f}}{\Borel{X}}$ are inverse to each other.
\EndProof

This will have some interesting consequences for the measurable
structure of the set of all subprobability measures. They are
introduced next.  

\medskip

We write $\SubProb{S, {\cal S}}$ for the set of all subprobability
measures on the measurable space $(S, {\cal S})$. This space
is made a measurable space upon taking as a $\sigma$-algebra
\begin{equation}
\label{intro-weak-sigma}
w({\cal S}) := \sigma\bigl(\{\basS{(S, {\mathcal S})}{A}{\bowtie q} \mid A \in
{\cal S}, q \in [0, 1]\}\bigr).
\end{equation}
Here 
\begin{equation*}
  \basS{(S, {\mathcal S})}{A}{\bowtie q} := 
\basS{\mathcal S}{A}{\bowtie q}
:= \{\mu \in \SubProb{S, {\cal S}} \mid \mu(A) \bowtie q\}
\end{equation*}
is the set of all subprobabilities on $(S, {\cal S})$ which evaluate on the measurable set $A$ as
$\bowtie q$, where $\bowtie$ is one of the relations $\leq, <,\geq,
>$. This $\sigma$-algebra is sometimes called the
\emph{weak-*-$\sigma$-algebra}. 

A morphism  $f: (S, {\cal S}) \to (T, {\cal T})$ in the category of
measurable spaces induces a map 
$
\SubProbSenza{f}: \SubProb{S, {\cal S}}\to\SubProb{T,{\cal T}}
$
upon setting 
\begin{equation*}
(\SubProbSenza{f})(\nu)(B) := \nu(\InvBild{f}{B})
\end{equation*}
for $B \in \Borel{T, {\cal T}}$; as usual, $\SubProbSenza{f}$ is
sometimes written as $\SubProb{f}$. Because 
\begin{equation*}
\InvBild{(\SubProbSenza{f})}{\basS{{\cal T}}{B}{\bowtie q}} =
\basS{\cal S}{\InvBild{f}{B}}{\bowtie q},
\end{equation*}
 this map is $w({\cal
  S})$-$w({\cal T})$-measurable as well. Thus $\SubProbSenza$ is an
endofunctor on the category of measurable spaces with measurable maps
as morphisms; in fact, it is the functorial part of a monad which is
sometimes called the \emph{Giry monad}~\cite{Giry}.

A ${\cal S}$-$w({\cal T})$-measurable map $K: S \to \SubProb{T,
  {\cal T}}$ is called a \emph{sub Markov kernel}, or sometimes a
\emph{stochastic relation} and denoted by $K: (S, {\cal S})\Trans (T,
{\cal T})$. From the definition it is apparent that a map $K: S\to
\SubProb{T, {\cal T}}$ is a stochastic relation iff these conditions
are satisfied:
\begin{enumerate}
\item the map $s\mapsto K(s)(D)$ is measurable for each $D\in{\cal
    T}$,
\item the map $D\mapsto K(s)(D)$ is a subprobability measure on ${\cal
    T}$ for each $s\in S$. 
\end{enumerate}

Returning to invariant sets, we note

\BeginLemma{gen-2}
Let $X$ and $Y$ be Polish, $f: X\to Y$ onto and measurable. Then there exists for
$\nu\in\SubProb{Y}$ a measure $\mu\in\SubProb{X, \Inv{\Kern{f}}{\Borel{X}}}$ such
that $\nu = \SubProb{f}(\mu)$. 
\EndLemma

\BeginProof
Put $\mu(A) := \nu(\Bild{f}{A})$ for $A\in \Inv{\Kern{f}}{\Borel{X}}$,
then $\mu\in\SubProb{S, \Inv{\Kern{f}}{\Borel{X}}}$ by
Lemma~\ref{gen-1}, and plainly $\SubProb{f}(\mu) = \nu$.
\EndProof

It follows from Corollary~\ref{is-borel-conseq} that a surjective map
$f:X \to Y$ between Polish spaces is final, and we obtain
from~\cite[Corollary 2.7]{EED-Alg-Prop_effFncts} that $\SubProb{f}:
\SubProb{X, \Inv{\Kern{f}}{\Borel{X}}}\to \SubProb{Y}$ is a Borel
isomorphism, when both spaces carry the weak $\sigma$-algebra.  

It may be interesting to compare the proof just given with the one
provided for the same fact in~\cite[Proposition 1.101]{EED-Book}. That
proof is based on an observation on the universally measurable right
inverse of a measurable map~\cite[Theorem 3.4.3]{Arveson}, which in
turn is based on a selection argument. The proof presented here makes
heavy use of the finality of surjective measurable maps which is
based essentially on Lusin's Separation Theorem.

\paragraph{Upper Closed Sets.}
Call a subset $V$ of the powerset of some set \emph{upper closed} iff
$A\in V$ and $A\subseteq B$ implies $B\in V$. Denote by
\begin{equation*}
  \Vau{S} := \{V\in w({\cal S}) \mid  V\text{ is upper closed}\}
\end{equation*}
all upper closed subsets of the weakly measurable sets of $\SubProb{S,
  {\cal S}}$. If $f: S\to T$ is ${\cal S}$-${\cal T}$-measurable,
define 
\begin{equation*}
  \Vau{f}(V) := \{W\in w({\cal T}) \mid \InvBild{(\SubProbSenza
      f)}{W}\in V\}.
\end{equation*}
Then $\Vau{f}: \Vau{S}\to \Vau{T}$. 

\paragraph{Some Conventions.}
\label{sec:some-conventions}
From now on, we will not write down explicitly the $\sigma$-algebra
${\cal S}$ of a measurable space $(S, {\cal S})$, unless there is good
reason to do so; if we need the $\sigma$-algebra of the measurable
space $S$, we refer to it as $\sigalg(S)$. Furthermore the space
$\SubProb{S}$ of all subprobabilities will be understood to carry the
weak-*-$\sigma$-algebra $w(\sigalg(S))$ always. 
When we consider $\weak(S)$ as a measurable space, its \sig-algebra
will always be the hit \sig-algebra. And whenever we consider a
measurable map $f:S\to T$ where $T$ has a hit \sig-algebra, we will
say that $f$ is \emph{hit-measurable}.
In an unambiguous context, e.g., whenever $S$ is a topological space with
$\sigalg(S)$ the Borel sets $\Borel{S}$ of $S$, we will write $w(S)$ rather than
$w(\sigalg(S))$. We will sometimes write
$\Sigma_{\rho}$ for $\Inv{\rho}{{\cal S}}$, and $\Sigma_{\Kern{f}}$
will be abbreviated as $\Sigma_{f}$.

\subsection{Some Indispensable Tools}
\label{sec:indispensable}

We post here for the reader's convenience some measure theoretic tools
which will be used all over; the reader may wish to
consult~\cite{EED-Meas} for more information and a tutorial on measures. Fix a set $S$. 

\paragraph{Dynkin's $\pi$-$\lambda$-Theorem.}
This technical tool is most useful when it comes to determine the $\sigma$-algebra generated by a family of sets~\cite[Theorem 10.1]{Kechris}. 

\BeginTheorem{pi-lambda}
Let $\mathcal{A}$ be a family of subsets of $S$ that is closed under finite intersections. Then $\sigma(\mathcal{A})$ is the smallest family of subsets containing $\mathcal{A}$ which is closed under complementation and countable disjoint unions. 
\QED
\EndTheorem

\paragraph{Choquet's Representation.}
The following condition on product measurability and an associated
integral representation attributed to Choquet is used~\cite[Corollary
3.4.3]{Bogachev}. Assume that $(S, {\mathcal S})$ is a measurable
space.
\BeginTheorem{Choquet}
Let $f: S \to \pReal$ be measurable and bounded, then
\begin{equation*}
C_{\bowtie}(f) := \{\langle s, r\rangle \in S \times \pReal \mid f(s) \bowtie r\} \in {\mathcal S}\otimes\Borel{\pReal}.
\end{equation*}
If $\mu$ is a $\sigma$-finite measure on ${\mathcal S}$, then
\begin{equation}
\label{Choquet-2}
\int_S f(s)\ \mu(dx)  = \int_0^\infty \mu(\{s \in S \mid f(x) > t\})\ dt
= (\mu\otimes\lambda)(C_{>}(f)).
\end{equation}
with $\mu\otimes\lambda$ as the product of $\mu$ with Lebesgue measure $\lambda$.
\QED
\EndTheorem
For $S$ an interval in $\Real$, the set 
$
C_>(f) = \{\langle s, t\rangle \in S \times \pReal \mid 0 \leq t < f(s)\}
$
may be visualized as the area between the $x$-axis and the graph of $f$. Hence formula~(\ref{Choquet-2}) specializes to the Riemann integral, if $f: \pReal\to\pReal$ is Riemann integrable, and $\mu$ is also Lebesgue measure.

\paragraph{Measurable Selections.}
Given a measurable space $S$ and a Polish space $X$, let $F(s)
\subseteq X$ be a non-empty closed subset of $X$ for all $s\in
S$. Call a sequence $\Folge{f}$ of measurable maps $f_n: S \to X$ a
\emph{Castaing representation for $F$} iff the set $\{f_n(s) \mid n
\in \Nat\}$ is dense in $F(s)$ for each $s\in S$. The Kuratowski and
Ryll-Nardzewski Selection Theorem~\cite[Corollary 6.9.4]{Bogachev}
states a condition on the existence of a Castaing representation for
$F$.

\BeginTheorem{Himmelberg}
Given a measurable space $S$ and a Polish space $X$, let $F(s)
\subseteq X$ be a non-empty closed subset of $X$ for all $s\in S$ such
that the set $ \{s\in S \mid F(s)\cap G \not= \emptyset\} \in
\sigalg(S)$ for all $G\subseteq X$ open. 
Then $F$ has a Castaing representation. \QED
\EndTheorem

\section{The Main Characters: Nondeterministic Kernels And Effectivity
  Functions}
\label{sec:main-char-nond-kern}
In this section we introduce the notion of \emph{stochastic
  nondeterminism} we are interested in, suitably defined so that
measurable concerns arising in a continuous setting are taken care
of. A model supporting this kind of behavior is that of
\emph{nondeterministic labelled Markov process}. Afterwards, we move a
step forward in complexity and consider \emph{stochastic effectivity
  functions}, which provide the building blocks for \emph{game models}
as developed in \cite{EED-GameLogic-TR}.

\subsection{Nondeterministic Labelled Markov Processes}
Nondeterministic labelled Markov processes (NLMP) were developed in
\cite{DWTC09:qest,Wolovick} as a nondeterministic variant of the
\emph{labelled Markov processes} studied by various authors
\cite{Desharnais,Desharnais-Edalat-Panangaden,EED-CongBisim}, a development
which generalized the Markov decision processes investigated by Larsen
and Skou~\cite{Larsen+Skou} for discrete state spaces to general
measurable spaces. 

Labelled Markov processes (LMP) are exactly the stochastic Kripke
\emph{frames}: that is, a state space $S$ with a labelled family of
stochastic relations or Markov kernels $K_a:S\Trans S$ working as
probabilistic transitions for each action $a$. NLMPs arise naturally,
e.g., by abstraction or underspecification of LMP. The nondeterminism
makes its appearance in internal form, in such a way that there is an
additional branching apart from that given by labels or actions. More
precisely, NLMPs allow, for each state $s$ and each action $a$, a
possibly infinite set  $\ka_a(s)$ of probabilistic behaviors. Models
combining probabilistic behavior with internal nondeterminism were
studied previously in the literature; \emph{probabilistic automata}
(as proposed by Segala
\cite{DBLP:conf/concur/Segala95,DBLP:journals/njc/SegalaL95,Stoelinga02anintroduction})
provide an example, although these specific models have only countable, hence
discrete, state spaces.

As in the case of LMP, the main ingredient of the definition of their
nondeterministic counterpart is the corresponding notion of
\emph{nondeterministic kernels} $\ka_a$; this requires a suitable
notion of measurability as well. The one we choose, \emph{hit-measurability}, specializes to the usual notion of measurability for the
deterministic case. A model for this approach comes from classical
topology, where the transition from a topological space to its hyper
space of closed subsets is supported by the Vietoris topology and its
variants~\cite{Kechris}.

In this paper, the focus will be on nondeterministic kernels, which are defined as follows.
\begin{definition}\label{def:nondet-kernel}
  Let $S$ be a measurable space. A \emph{nondeterministic
    kernel} on $S$ is a hit-measurable map $\ka : S \to \weak(S)$.
  We call $\ka$  \emph{image-finite} if the sets
  $\ka(s)$ are finite  for all $s\in S$,  \emph{image-countable}
  maps are defined analogously. 
\end{definition}

Consequently, $\ka(s)$ is for each $s\in S$ a measurable subset of
$\SubProb{S}$ such that $\{s\in S\mid \ka(s)\cap
G\not=\emptyset\}\in\sigalg(S)$ for every measurable subset $G$ of
$\SubProb{S}$. 

Having defined our kernels, we fix a set $L$ of labels and write down a formal definition of NLMP:
\begin{definition}\label{def:NLMP}
  A \emph{nondeterministic labelled Markov process} (NLMP) is
  a tuple $\nlmp S = (S, \{\ka_a : a\in L \})$ where the \emph{state
    space} $S$ is a measurable space, and for each \emph{label} $a\in
  L$, $\ka_a$ is a  nondeterministic
    kernel on $S$. We call $\nlmp S$
  \emph{image-finite (image-countable)} if all 
  of  its kernels $\ka_a$ are. 
\end{definition}

In \cite{DWTC09:qest,Terraf-Bisim-MSCS} the problem
of defining appropriate notions of bisimulation 
 for NLMP was addressed. State bisimulation for NLMP (called \emph{traditional bisimulation} in \cite{Terraf-Bisim-MSCS}) 
 is a generalization 
 of both the
 standard notion of bisimulation for nondeterministic processes,
 e.g.\ Kripke models, and of \emph{probabilistic bisimulation} of
 Larsen and Skou \cite{Larsen+Skou}, later studied in a more general
 measure theoretic context by Panangaden et al, see~
 \cite{Panangaden}. An alternative notion, \emph{event bisimulation}
 puts its emphasis on families of measurable subsets of the state
 space that are ``respected'' by the kernels; in the deterministic
 case, event bisimulations are included in the investigation of the
 lifting of countably generated equivalence relations to the space of
 probabilities \cite{EED-KleisliMorph}.

 Event bisimulations for a nondeterministic kernel are defined as
sub-$\sigma$-algebras on its state space. They yield a relation on the
state space by the construction \ref{eq:rel-from-sig-algebra}, and, by extension, a
relation on its subprobabilities.
\begin{definition}
Let $S$ be a measurable space. 
  \begin{enumerate}
  \item \label{def:nlmp_eb}
    An \emph{event bisimulation} on the nondeterministic kernel $\ka :
    S\to \weak(S)$ is a sub-$\sigma$-algebra $\Lda$ of $\sigalg(S)$ such that
    $\ka:(S,\Lda)\to(\weak(S),\hit(\weak(\Lda)))$ is measurable. We also say that a
    relation $R$ is an event bisimulation if there is an event
    bisimulation $\Lda$ such that $R=\mathcal{R}(\Lda)$.
  \item \label{def:nlmp_ls_sb}
    A relation $R$ is a \emph{state bisimulation}    on $\ka :    S\to \weak(S)$ if it is
    symmetric and for all $s,t\in S$, $s\mathrel{R}t$ implies that
    for all 
    $\mu\in \ka(s)$ there exists $\mu'\in \ka(t)$ such that
    $\mu\bR\mu'$.
  \end{enumerate}
  We say that $s,t\in S$ are state (resp., event-) \emph{bisimilar}, denoted
  by $s \sbisim t$ ($s \ebisim t)$, if there is a
  state (event) bisimulation $R$ such that $s\mathrel{R}t$.
\end{definition}

State bisimulations for NLMP are exactly the relations on the
state-space that satisfy the above definition simultaneously for each
kernel $\ka_{a}$ of the NLMP. This holds analogously for event
bisimulations. Since $\mathcal{R}(\Lda)$ is an equivalence, 
every event bisimulation yields an 
equivalence relation. This is not true for state bisimulations; in
case $R$ is a state bisimulation that is actually an equivalence
relation, we will say that $R$ is a \emph{state bisimulation
  equivalence}.

While state bisimulations will have a straightforward generalization
for effectivity functions, the r\^ole of event-based bisimulations will become apparent
only when subsystems are treated in Section~\ref{sec:coalgebraic-approach}.

\subsubsection{The Category of Nondeterministic Kernels}
In this section we will give a categorical presentation of
nondeterministic kernels. It is then immediate how to 
extend this to obtain a  category of NLMPs. Recall that the map
$S\mapsto\sigalg(S)$ can be regarded as a subfunctor of the
contravariant power set functor, acting on arrows by taking inverse
images. Formally, given  a measurable $f:S\to T$, we define $\sigalg
f:\sigalg(T)\to\sigalg(S)$ as
\begin{equation*}
\sigalg f (Q) := \InvBild{f}{Q}.
\end{equation*}
Then the map $S\mapsto\weak(S)$ is the composition $\sigalg\circ\subp$
of the Giry functor $\SubProbSenza$ and $\sigalg$. On arrows it behaves like this:
\begin{equation*}
\weak(f)(H) = \InvBild{(\SubProbSenza f)}{H}
\end{equation*}
for $H$ a measurable set of measures.
\begin{definition}\label{def:nlmp-morphism}
  The \emph{category $\NK$\MMP{Font changed, just testing} of nondeterministic kernels} has as objects all hit-measurable
  maps $\ka:S\to\weak(S)$ for some measurable space $S$. Given such  $\ka$
   and $\ka':T\to \weak(T)$, a $\NK$-morphism is a measurable
  map $f:S\to T$ such that the following diagram commutes:
  \begin{equation*}
    \xymatrix{
      S \ar[rr]^f \ar[d]_\ka  && T \ar[d]^{\ka'} \\
      \weak(S) 
      &&  \weak(T) \ar[ll]^{\weak(f)}}
  \end{equation*}
  that is, for all $s\in S$, 
  \begin{equation}\label{eq:diag-NLMP-morphism}
    \ka(s) = \InvBild{(\SubProbSenza f)}{\ka'(f(s))}.
  \end{equation}
\end{definition}
It is easy to see that this defines a category. 

Then the category of NLMPs (for a fixed set of labels $L$) has as objects all NLMP, and given $
\nlmp S = (S,
\{\ka_a : a\in L \})$ and $ \nlmp{S'} = (S', \{\ka'_a : a\in L \})$, a
morphism $f: \nlmp S \to \nlmp{S'}$ is a measurable map $f:S\to S'$
such that $f:\ka_a\to \ka'_a$ is a $\NK$-morphism for all $a$.

It should be noted that although a nondeterministic kernel on $S$ is a map from
$S$ to $\weak(S)$, we can't say that it is a coalgebra: In spite of $\weak(S)$ being a measurable space (with the hit
\sig-algebra), $\weak$ is not an endofunctor on the category of
measurable spaces. 

The first satisfactory test for this notion of morphism is that so
related nondeterministic kernels are state bisimilar (cf. \emph{zig-zag}
morphisms for LMP).

\BeginDefinition{def-direct-sum}
  Let $\ka:S\to \weak(S)$ and $\ka':T\to\weak(T)$ be nondeterministic kernels with
  $S$ and $T$ disjoint. The \emph{direct sum} $\ka\oplus \ka'$ is the
  nondeterministic kernel defined on the direct sum space $S\oplus T$ by the following
  stipulation
  \begin{equation*}
(\ka\oplus \ka')(x):=
                    \begin{cases} 
                            \ka(x) & x\in S \\
                            \ka'(x) & x \in T
                    \end{cases}
 \end{equation*}
\EndDefinition

The proof of the following proposition is included in the more general
Theorem~\ref{p:strong-morphisms-bisim} and thus we omit it;  
its statement formalizes the relation between morphisms and bisimulations.
\BeginProposition{pr:morphism-kernels-bisim}
 (Notation as in Definition~\ref{def:nlmp-morphism}.) 
Let $G := \{\langle s, f(s)\rangle\mid s\in S\}$ be the graph of $f$
with converse $G^{\smallsmile}$ Then $R := G\cup G^{\smallsmile}$  is a state bisimulation on $S\oplus T$.
\EndProposition
\begin{proof}
  It's easy to see that $R$-closed subsets $A$ of the disjoint union
  $S\oplus T$  are characterized by 
  \begin{equation}\label{eq:ker-f-closed}
     A = f^{-1}[A] \uplus f[A].
  \end{equation}
  To check that $R$ is a state bisimulation, we
  should see that  for any $s\mathrel{R}t$, and for every $\mu\in
  \ka(s)$ there exists a $\nu\in \ka'(t)$ such that $\mu\mathrel{R}\nu$.

  Assume $s\in S$, $s\mathrel{R}t$ and $\mu\in
  \ka(s)$. Then $t = f(s) \in T$  and we must find $\nu$ as above. We have a
  perfect candidate for that. By Equation~(\ref{eq:diag-NLMP-morphism}), $\subp
  f(\mu)\in \ka'(f(s)) = \ka'(t)$. Let's check $\mu\mathrel{\bR}\subp f (\mu)$. Take a
  $R$-closed set $A$. We should have 
  \begin{equation*}
    \mu(A) = \subp  f(\mu) (A) = \mu(f^{-1}[A]),
  \end{equation*}
  the last equality by definition of $\subp f$. Now $\mu(A) = \mu(
  f^{-1}[A])+  \mu(f[A])$ by Equation~(\ref{eq:ker-f-closed}). But since $\mu$ is
  supported on $S$, we have $\mu(f[A]) = 0$. Hence $\mu(A) = \mu(
  f^{-1}[A])$ and we have our result. The other way is symmetric.    
\end{proof}

\subsection{Stochastic Effectivity Functions}
\label{sec:effectivity-functions}
While nondeterministic kernels assign to
each state $s$ in a measurable state space $S$ a measurable set $\kappa(s)$ of subprobabilities in a
measurable way, effectivity functions use a different
approach. In order to explain it, we assume that we see player Angel
in state $s$, and we assume that Angel has a strategy to achieve a set
$A$ of probabilities over $S$ when playing game
$\gamma$. We deal with sets of probabilities as the possible
distribution of the new state after $\gamma$ because we want to model
the game semantics probabilistically. Hence the new distribution may
be an element of $A$. But if $B$ is a set of probabilities which
contains $A$, Angel has a strategy to achieve $B$ as well, so the
\emph{portfolio}, which we define as the family of all sets for which Angel has a strategy
in state $s$ playing $\gamma$, is upward closed. We want the sets of
measures to be measurable themselves, hence the portfolio is an upper
closed subset of $w(\sigalg(S))$, so that we obtain a map $S\to
\Vau{S}$. But this is not yet sufficient, a further requirement has to
be added. Before stating it, we mention that we will have good reasons to separate the domain from
the space underlying the range for an effectivity function. Thus we assume
that there is a second measurable space $T$ for which the effectivity
function determines possible sets of distributions as quantitative
assessments. It helps to visualize $S$ as the space of inputs, and $T$
as the space of outputs, respectively.

Let $H \in \bSS{w({\sigalg(T)})}$ be a measurable subset of
$\SubProb{T}\times[0, 1]$ indicating such a \emph{quantitative assessment
  of subprobabilities}. A typical example could be
\begin{equation*}
\{\langle \mu, q\rangle \mid \mu \in \basS{T}{A}{\geq q},\ 0 \leq q \leq 1\} 
= 
\{\langle \mu, q\rangle \mid \mu(A) \geq q, \ 0 \leq q \leq 1\}
\end{equation*}
for some $A \in \sigalg(T)$, combining all
subprobabilities and reals such that the probability for the given set
$A$ of states do not lie below this value. Consider a map
$P: S \to \Vau{T}$, fix some real $q$ and consider the
\emph{horizontal section} $H_q = \{\mu \mid \langle \mu, q\rangle \in
H\}$ of $H$ at $q$, viz., the set of all measures evaluated through
$q$. We ask for all states $s$ such that $H_{q}$ is \emph{effective
  for $s$}, i.e., $ \{s \in S \mid H_q \in P(s)\}$ to be  a measurable
subset of $S$.
 It turns out, however, that this is
not yet enough, we also require the real components being captured
through a measurable set as well --- after all, the real component
will be used to be averaged over later on, so it should behave
decently. This idea is captured in the following definition.

\BeginDefinition{t-measurability}
Call a map $P: S \to \Vau{T}$ \emph{t-measurable} iff 
\begin{equation*}
\{\langle s, q\rangle \mid H_q\in P(s)\} \in \bSS{\sigalg(S)}\}
\end{equation*}
whenever $ H \in\bSS{w({\sigalg(T)})}$. 
\EndDefinition

Thus if $P$ is t-measurable then we know in particular that all pairs of states and
numerical values indicating the effectivity of the evaluation of a
measurable set $A\in {\sigalg(T)}$, i.e., the set $ \{\langle s,
q\rangle \mid \basS{\sigalg(T)}{A}{\bowtie q} \in P(s)\} $ is always a
measurable subset of $S\otimes[0, 1]$. This is so because we know that
\begin{equation*}
\{\langle \mu, q \rangle \mid \mu\in \basS{\sigalg(T)}{A}{\bowtie q}, 0 \leq q
\leq 1\}
=
\{\langle \mu, q\rangle \in \SubProb{\sigalg(T)}\times[0, 1]\mid \mu(A) \bowtie q\},
\end{equation*}
and the latter set is a member of $\bSS{w({\sigalg(T)})}$ by Theorem~\ref{Choquet}.

For illustration, a description of t-measurability as Borel measurability in terms
of a compact Hausdorff topology on the Borel sets of $\SubProb{T}$ is
given. We assume for this the target space $T$ to be countably generated.
{
  \newcommand{\bars}[1]{\ensuremath{\|#1\|}} Let $X$ be a measurable
  space with a countably generated $\sigma$-algebra, say, $\sigalg{X}
  = \sigma(\{G\mid G\in \mathcal{G}\})$ with $\mathcal{G}$
  countable. We define on $\PowerSet{\weak(X)}$ the \emph{Priestley
    topology} $\tau$
which has as a sub basis 
\begin{align*}
 \mathfrak{B}:= \{&\bars{\basS{X}{A}{\leq q}}\mid A\in\mathcal{G}, q\in[0, 1]\text{
    rational}\}\ \cup\\
\{&-\bars{\basS{X}{A}{\leq q}}\mid A\in\mathcal{G}, q\in[0, 1]\text{
    rational}\}
\end{align*}
with
\begin{align*}
  \bars{W} & := \{{\cal T}\subseteq\PowerSet{w({\SubProb{X}}})\mid W\in {\cal T}\},\\
-\bars{W} & := \{{\cal T}\subseteq\PowerSet{w({\SubProb{X}}})\mid W\not\in {\cal T}\}.
\end{align*}
We obtain from Goldblatt's Theorem~\cite{Goldblatt_TopProofs}
\BeginTheorem{goldblatt-rasiowa}
$\PowerSet{\weak(\sigalg(X))}$ is a compact Hausdorff space with
the Priestley topology $\tau$  generated by $\mathfrak{B}$. $\Vau{X}$ is compact, hence Borel
measurable. \QED
\EndTheorem

Now let $P: S\to \Vau{T}$ be a map such that the $\sigma$-algebra on $T$ is countably generated, then we have

\BeginLemma{t-meas}
If $P$ is t-measurable, then $P$ is $\sigalg(S)$-$\Borel{\Vau{T}}$-measurable.
\EndLemma

\BeginProof
1.  Define $\mathfrak{B}$ as a sub-basis for the Priestly topology on
$\PowerSet{\weak(\SubProb{T})}$ with countable generator $\mathcal{G}$
for $\sigalg(T)$, as above.  We have to show that $\InvBild{P}{H}\in
\sigalg(S)$, provided $H\in\Borel{\Vau{T}}$. Because $\Vau{T}$ is
compact, hence closed, and because the Borel sets of $\tau$ are
countably generated (since the topology is), the Borel sets of
$\Vau{T}$ are countably generated with $\mathfrak{B}\cap\Vau{S}$ as
its generator. Thus it is sufficient to show that $\InvBild{P}{H}\in
\sigalg(S)$, provided $H\in \mathfrak{B}$. This is enough since the set $
\{H\in \Borel{\Vau{T}} \mid \InvBild{P}{H}\in \sigalg(S)\} $ is a
$\sigma$-algebra; if it contains $\mathfrak{B}$, it contains all its
countable unions, so in particular all open sets.

2.
Now take $A\in \mathcal{G}$ and $q\in[0, 1]\cap\Rational$, then
\begin{align*}
  \InvBild{P}{\bars{\basS{T}{A}{\geq q}}} 
& = \{s \in S \mid P(s)\in \bars{\basS{T}{A}{\geq q}}\}\\
& = \{s\in S \mid \basS{T}{A}{\geq q}\in P(s)\}\\
& = \{\langle s, r\rangle \mid  \basS{T}{A}{\geq r}\in P(s)\}_{q} \in \sigalg{(S)}.
\end{align*}
By the same argument is can be shown that 
\begin{equation*}
\InvBild{P}{-\bars{\basS{T}{A}{\geq q}}} = \{\langle s, r\rangle \mid
\basS{T}{A}{\geq r}\not\in P(s)\}_{q} \in \sigalg{(S)}.
\end{equation*}
\EndProof
}

Returning to the general discussion, we introduce stochastic effectivity functions now.

\BeginDefinition{effFnct} 
Given measurable spaces $S$ and $T$, a \emph{stochastic effectivity
  function} $P: S \eTrans T$ from $S$ to $T$ is a t-measurable map $P:
S \to\Vau{T}$.
\EndDefinition

These are easy examples for stochastic effectivity functions, see~\cite[Section 3]{EED-Alg-Prop_effFncts}.

\BeginExample{ex-stoch-effct-fncts}
Let $K_{n}: S\Trans T$ be a stochastic relation for each $n\in\Nat$. Then 
\begin{align*}
  s &\mapsto \{A\in w(\sigalg T)\mid K_{n}(s)\in A\text{ for all }n\in\Nat\}\\
  s &\mapsto \{A\in w(\sigalg T)\mid K_{n}(s)\in A\text{ for some }n\in\Nat\}
\end{align*}
define stochastic effectivity functions $S\eTrans T$. In particular,
\begin{equation*}
  s \mapsto \{A\in w(\sigalg T)\mid K(s)\in A\}
\end{equation*}
defines a stochastic effectivity function for a stochastic relation
$K: S\Trans S$. Conversely, if $P(s)$ is always a principal ultrafilter in
$w(\sigalg T)$ for each $s\in S$  and  effectivity function $P:
S\eTrans T$, then $P$ is derived from a stochastic relation. The
interplay between stochastic relations and effectivity functions
generated by them is investigated in greater detail in~\cite{EED-GameLogic-TR}

Let $S$ be finite, say, $S = \{1, \dots, n\}$, and assume we have a transition system $\to_{S}$ on $S$, hence
a relation $\to_{S}{}\subseteq S\times S$. Put $\mathrm{succ}(s) := \{s'\in S\mid
s\to_{S} s'\}$ as the set of a successor states for state $s$. Define for
$s\in S$ the set of weighted successors
\begin{equation*}
  \ell(s):= \bigl\{\textstyle{\sum_{s'\in
      \mathrm{succ}(s)}\alpha_{s'}\cdot e_{s'}}\mid 
\Rational\ni\alpha_{s'}\geq 0 \text{ for } s'\in \mathrm{succ}(s), \textstyle{\sum_{s'\in \mathrm{succ}(s)}\alpha_{s'}}\leq 1\bigr\}.
\end{equation*}
The upward closed set
\begin{equation*}
P(s)  := \{A\in\Borel{\SubProb{\{1, \dots, n\}}}\mid \ell(s)\subseteq A\}
\end{equation*}
yields a stochastic effectivity function $S\eTrans S$. 
\EndExample

The last example will be scrutinized below, the first ones indicate
that there is some easy relationship between countable sets of
effectivity functions. This is supported by the following observation,
the proof of which is straightforward. 

\begin{lemma}\label{l:countable-union-EF}
  Let $S$ and $T$ be measurable spaces, and $\mathfrak{P}  = \{P_\gamma \st
  \gamma\in\Gamma\}$ is a  countable family of t-measurable maps
  from $S$ to $\cones(T)$. Then the maps $\bigcup \mathfrak{P}$ and
  $\bigcap\mathfrak{P}$ defined
  as
  \begin{align*}
    \bigl(\textstyle\bigcup \mathfrak{P}\bigr)(s) & :=\textstyle\bigcup \{P_\gamma(s)\mid  \gamma\in\Gamma\},\\
    \bigl(\textstyle\bigcap \mathfrak{P}\bigr)(s) & :=\textstyle\bigcap \{P_\gamma(s)\mid  \gamma\in\Gamma\}
  \end{align*}
  for each $s\in S$, are t-measurable.
\QED
\end{lemma}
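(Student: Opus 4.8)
The plan is to read both assertions directly off the definition of t-measurability, the whole content being that the product $\sigma$-algebra $\bSS{\sigalg(S)}$ is closed under countable set operations. First I would check that $\bigcup\mathfrak{P}$ and $\bigcap\mathfrak{P}$ really land in $\cones(T)$. Each value $P_\gamma(s)$ is an upper closed subfamily of $w(\sigalg(T))$, and both a pointwise union and a pointwise intersection of upper closed subfamilies of $w(\sigalg(T))$ are again upper closed: if a member $A$ of the union (resp.\ of the intersection) satisfies $A\subseteq B$ with $B\in w(\sigalg(T))$, then $B$ lies in one (resp.\ in every) $P_\gamma(s)$ containing $A$ by upper-closedness of that $P_\gamma(s)$, hence in the union (resp.\ in the intersection). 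Thus $\bigcup\mathfrak{P}$ and $\bigcap\mathfrak{P}$ are genuine maps $S\to\cones(T)$.

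Next I would fix an arbitrary $H\in\bSS{w(\sigalg(T))}$ together with its horizontal sections $H_q$, and use the elementary fact that membership of $H_q$ distributes over the pointwise operations:
\begin{equation*}
H_q\in\bigl(\textstyle\bigcup\mathfrak{P}\bigr)(s)\iff \exists\,\gamma\in\Gamma:\ H_q\in P_\gamma(s),
\qquad
H_q\in\bigl(\textstyle\bigcap\mathfrak{P}\bigr)(s)\iff \forall\,\gamma\in\Gamma:\ H_q\in P_\gamma(s).
\end{equation*}
Translating this into the test sets of the definition yields
\begin{equation*}
\{\langle s,q\rangle\mid H_q\in(\textstyle\bigcup\mathfrak{P})(s)\}
=\bigcup_{\gamma\in\Gamma}\{\langle s,q\rangle\mid H_q\in P_\gamma(s)\},
\end{equation*}
and the corresponding identity with $\bigcap$ replacing both occurrences of $\bigcup$.

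Each of the sets $\{\langle s,q\rangle\mid H_q\in P_\gamma(s)\}$ belongs to $\bSS{\sigalg(S)}$ because $P_\gamma$ is t-measurable. Since $\Gamma$ is countable and $\bSS{\sigalg(S)}$ is a $\sigma$-algebra, the countable union, respectively intersection, above again lies in $\bSS{\sigalg(S)}$; as $H$ was arbitrary, this is precisely t-measurability of $\bigcup\mathfrak{P}$ and of $\bigcap\mathfrak{P}$.

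I do not expect a genuine obstacle: the argument is a one-line unravelling of the definition combined with closure of a $\sigma$-algebra under countable operations. The only two points I would watch are the bookkeeping that keeps the pointwise values upper closed (so that the maps are $\cones(T)$-valued to begin with), and the indispensable hypothesis that $\Gamma$ be countable, since an uncountable family could drive the union of the test sets out of $\bSS{\sigalg(S)}$.
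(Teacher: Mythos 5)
Your proof is correct: the paper itself omits the argument, declaring it straightforward, and your unravelling of the definition (pointwise upper-closedness, distribution of the membership test over the union/intersection, then countable closure of the product $\sigma$-algebra $\sigalg(S)\otimes\Borel{[0,1]}$) is exactly the intended reasoning. The two points you flag --- verifying the maps are genuinely $\cones(T)$-valued, and the essential use of countability of $\Gamma$ --- are indeed the only places where anything could go wrong, so nothing is missing.
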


Morphisms for effectivity functions are defined as pairs of measurable maps which preserve the structure. To be specific

\BeginDefinition{morph-eff-fnct}
Let $P: S\eTrans T$ and $Q: M\eTrans N$ be stochastic effectivity functions, then a pair $(f, g)$ of measurable maps $f: S\to M$ and $g: T\to N$ is called a \emph{morphism $(f, g): P\to Q$} iff this diagram commutes
\begin{equation*}
\xymatrix{
S\ar[rr]^{f}\ar[d]_{P} && M\ar[d]^{Q}\\
\Vau{T}\ar[rr]_{\Vau{g}}&&\Vau{N}
}
\end{equation*}
\EndDefinition
By expanding, we obtain
\begin{equation*}
  G\in Q(f(s)) \Longleftrightarrow \InvBild{(\SubProbSenza{g})}{G}\in P(s)
\end{equation*}
for all $s\in S$ and all $G\in w(\sigalg(N))$. A pair $(\alpha,
\beta)$ of equivalence relations $\alpha$ on $S$ and $\beta$ on $T$ is
called a \emph{congruence for $P:S\eTrans T$ } iff we can find a
stochastic effectivity function 
$P_{\alpha, \beta}:
\Faktor{S}{\alpha}\to \Faktor{T}{\beta}$ 
such that the pair
  $(\eta_{\alpha}, \eta_{\beta})$ of factor maps is a morphism $P\to
  P_{\alpha, \beta}$. Thus we obtain 
  \begin{equation*}
    G\in P_{\alpha, \beta}(\Klasse{s}{\alpha}) \Longleftrightarrow \InvBild{(\SubProbSenza{\eta_{\beta}})}{G}\in P(s),
  \end{equation*}
whenever $G\in w(\sigalg(\Faktor{T}{\beta}))$. Thus if $\alpha$ cannot
distinguish the elements $s, s'\in S$, $\beta$ cannot distinguish the elements
in $P(s)$ from those in $P(s')$. It will be necessary to distinguish the domain from the space
  underlying the range for an effectivity function, as we will see
  when discussing subsystems in
  Section~\ref{sec:coalgebraic-approach}. 

If $S = T$ and $M = N$, however, we talk about a morphism
  $f: P\to Q$ resp.\ about a congruence $\alpha$ on $P$. Then we have, e.g., for the congruence $\alpha$ the equivalence
  \begin{equation*}
    G\in P_{\alpha}(\Klasse{s}{\alpha}) \Longleftrightarrow  \InvBild{(\SubProbSenza{\eta_{\alpha}})}{G}\in P(s).
  \end{equation*}
  Of course, one could still work in this situation with a morphism
  $(f, g): P\to Q$. But this would not make too much sense, since one
  wants to study in this case the effect the map $f: S\to M$ has on
  the effectivity function, so one wants to know how $P$ operates on
  $\Vau{f}$; a similar argument holds for congruences. The situation
  changes, however, when we have different measurable structures on
  the same carrier set.

  Stochastic effectivity functions with their morphisms constitute a
  category $\EF$\MMP{Font changed, just testing}. Morphisms and congruences for stochastic effectivity
  functions are studied in depth in~\cite{EED-Alg-Prop_effFncts}.

Just to provide an illustration, we establish
\BeginProposition{kern-is-morphism}
Let $\ul{X}$ and $\ul{M}$ be standard Borel spaces with stochastic
effectivity functions $P$ on $\ul{X}$ and $Q$ on $\ul{M}$, and assume
that $f: P\to Q$ is a surjective morphism. Then $\Kern{f}$ is a
congruence for $f$.
\EndProposition

\BeginProof
We show that $f\times id_{[0, 1]}: \ul{X_f}\times[0, 1]\to M\otimes
[0, 1]$ is final, then we can apply~\cite[Proposition
3.14]{EED-Alg-Prop_effFncts}. But finality of this compound map follows
from finality of $f$ together with the observation that
$\ul{M}\otimes[0, 1]$ is standard Borel:
 \begin{align*}
  \InvBild{(f\times id_{[0, 1]})}{\Borel{M\otimes[0, 1]}}
& = \InvBild{(f\times id_{[0, 1]})}{\Borel{M}\otimes\Borel{[0, 1]}},&&
\text{since $M$ is standard Borel}\\
& = \InvBild{f}{\Borel{M}}\otimes\Borel{[0, 1]}\\
& = \Xf\otimes\Borel{[0, 1]},&& \text{since $f$ is final}.
\end{align*}
This implies the assertion.
\EndProof

This indicates that there is a close interplay between congruences and
surjective morphisms, which is actually what one would expect when
studying both.

\paragraph{Duality.}
As an illustration for the manipulation of effectivity functions, we
introduce dual effectivity functions. When interpreting game logics
through effectivity functions, one assumes that the game is
determined, which means that either Angel or Demon has a winning
strategy. This assumption entails that if Angel has no strategy for
achieving a set $A$, Demon will have a strategy for achieving the
complement $A^{c}$ of $A$, so Angel is not effective for $A$ iff Demon
is effective for $A^{c}$. This can be modeled through the dual
$\partial P$ of the effectivity function $P$, so that $\partial P$
will encode the sets for which Demon is effective. Because it does not
cost much more, we define the dual for effectivity functions $S\eTrans
T$ rather than for $S\eTrans S$.

\BeginDefinition{def:duality}
  Let $P:S\eTrans T$ be given. The \emph{dual $\partial
  P$ of $P$} is defined by
  \begin{equation*}
\partial P(s):= \{D\in\weak(T) \st D^c\notin P(s)\}.
\end{equation*}
\EndDefinition

It is straightforward to see that $\partial$ is an
involutive automorphism of the category of $\EF$, when defined as the
identity on $\EF$-morphisms. Just for the record:
\begin{prop}
  $\partial$ is an endofunctor of the category $\EF$ such that
  $\partial^2 = \mathrm{Id}_\EF$.
\QED
\end{prop}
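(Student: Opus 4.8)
The plan is to establish three things, in order: that $\partial$ is well defined on objects (it sends a stochastic effectivity function to another one), that declaring $\partial$ to be the identity on arrows yields a functor, and that $\partial$ squares to the identity. For an object $P:S\eTrans T$ I would first check that $\partial P(s)$ is upper closed for every $s$. If $D\in\partial P(s)$ and $D\subseteq D'$ with $D'\in\weak(T)$, then $(D')^c\subseteq D^c$; were $(D')^c\in P(s)$, upward closure of $P(s)$ would force $D^c\in P(s)$, contradicting $D\in\partial P(s)$. Hence $(D')^c\notin P(s)$, i.e.\ $D'\in\partial P(s)$, so indeed $\partial P(s)\in\Vau{T}$.

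The one clause that needs genuine attention is t-measurability of $\partial P$. The key observation is that for $H\in\bSS{w(\sigalg(T))}$ its complement $H^c$ in $\SubProb{T}\times[0,1]$ again lies in the product $\sigma$-algebra $\bSS{w(\sigalg(T))}$, and that its horizontal sections satisfy $(H^c)_q=(H_q)^c$, the latter complement being taken inside $\SubProb{T}$. Consequently $\{\langle s,q\rangle\mid H_q\in\partial P(s)\}=\{\langle s,q\rangle\mid (H_q)^c\notin P(s)\}=\{\langle s,q\rangle\mid (H^c)_q\notin P(s)\}$, which is exactly the complement in $\bSS{\sigalg(S)}$ of the set delivered by t-measurability of $P$ applied to $H^c$. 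Thus $\partial P$ is t-measurable and $\partial P:S\eTrans T$ is again a stochastic effectivity function.

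Next I would verify that a morphism $(f,g):P\to Q$ remains a morphism $\partial P\to\partial Q$; functoriality is then automatic, since $\partial$ does not alter the underlying maps and so preserves identities and composites on the nose. Using that preimages commute with complementation, $\big(\InvBild{(\SubProbSenza{g})}{G}\big)^c=\InvBild{(\SubProbSenza{g})}{G^c}$, I apply the morphism condition for $(f,g):P\to Q$ to $G^c$, obtaining $G^c\in Q(f(s))$ iff $\InvBild{(\SubProbSenza{g})}{G^c}\in P(s)$; negating both sides turns this into $G\in\partial Q(f(s))$ iff $\InvBild{(\SubProbSenza{g})}{G}\in\partial P(s)$, which is precisely the morphism condition for $\partial P\to\partial Q$.

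Finally, $\partial^2=\mathrm{Id}$ is pure Boolean bookkeeping: $D\in\partial\partial P(s)$ iff $D^c\notin\partial P(s)$ iff it is \emph{not} the case that $(D^c)^c\notin P(s)$ iff $D\in P(s)$, so $\partial\partial P=P$ on objects, while on arrows $\partial^2$ is trivially the identity. I expect the only real subtlety to be the t-measurability step, where one must notice that passing from $H$ to its complement swaps ``effective'' for ``not effective'' section by section without leaving the $\sigma$-algebra $\bSS{w(\sigalg(T))}$; everything else is elementary manipulation of upper-closed families and complements.
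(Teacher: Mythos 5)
Your proof is correct and is precisely the routine verification that the paper omits: the proposition is stated there with no proof beyond the remark that it is ``straightforward to see'' that $\partial$ is an involutive automorphism acting as the identity on arrows. Your four steps (upper closedness, t-measurability via the complement $H^{c}$ and the identity $(H^{c})_{q}=(H_{q})^{c}$, preservation of the morphism condition using $\InvBild{(\SubProbSenza{g})}{G^{c}}=\bigl(\InvBild{(\SubProbSenza{g})}{G}\bigr)^{c}$, and the Boolean involution) are exactly what that claim implicitly relies on, with the t-measurability step being the only nontrivial point, as you note.
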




The dual of $P$ will be helpful when interpreting the box operator in
the two level logic discussed in Section~\ref{sec:two-level-logic}. 

\section{Bisimulations}
\label{sec:bisimulations}

State bisimulations on effectivity functions are defined as binary
relations on the base space. We follow the lines of the notion of
bisimulation for games as spelled in \cite{Pauly-Parikh}, but
extending it to continuous spaces. For motivation, we assume first
that we are in a discrete setting and consider
an effectivity function $Q$ from $S$ to the set $\{V\subseteq
\PowerSet{S}\mid V\text{ is upper closed}\}$. This gives a scenario
very similar to the one considered in~\cite{Pauly-Parikh}. We call a
symmetric relation $R\subseteq S\times S$ a state bisimulation for $Q$
iff we have for each $\langle s, t\rangle\in R$ the following: Given a
set $A \in Q(s)$, we can find a set $B\in Q(t)$ such that for each
element of $B$ there exists a related element of $A$, so that we can
find for each $t'\in B$ an element $s'\in A$ with $\langle s',
t'\rangle \in R$. This is the  generalization of Milner's
definition to upward closed sets. We are working with states on the
one hand, and with distributions over states on the other hand, so we
have to adapt the definition to our scenario. The straightforward way
out is to define the relation for distributions, giving us the
relation $\bar R$ on $\SubProb{S}$. This observation suggests the
following definition of state bisimulation.

\BeginDefinition{state-bisim-eff-fncts}
  Let $P:S \eTrans S$ be an effectivity function. A relation
  $R\subseteq S\times S$ is a \emph{state bisimulation on} $P$ if it
  is symmetric and for all $s,t\in S$ such that $s\mathrel{R}t$, we
  can find for each $A\in P(s)$ a set $B\in P(t)$ with this property:
  for each $\nu\in B$ there exists $\mu\in A$ such that $\mu
  \mathrel{\bar R} \nu$.
\EndDefinition

In the following subsections we will investigate how this new notion
interacts with morphisms and provide a logical characterization for
effectivity functions satisfying certain finiteness assumptions.

\subsection{Effectivity Function Morphisms Induce  State Bisimulations.}
In this section, we introduce strong morphisms and  show that the
graph of a strong morphism 
induces a bisimulation on the direct sum of two effectivity
functions. This result is analogous to
Proposition~\ref{pr:morphism-kernels-bisim}, and the argument is very
similar,  so that it can be considered as a generalization. The generalization will be 
helpful in Section~\ref{sec:representability}, where the relation
between effectivity functions and nondeterministic kernels will be
studied in detail. 
\begin{definition}
  Let $P:S\eTrans S$ and $Q:T\eTrans T$ be effectivity
  functions with disjoint state spaces $S$ and $T$. The \emph{direct sum} $P\oplus Q$ is the effectivity
  function defined on the direct sum space $S\oplus T$ by the following
  stipulation
  \begin{equation*}
(P\oplus Q)(x):=
                    \begin{cases} 
                            P(x) & x\in S \\
                            Q(x) & x \in T
                    \end{cases}
\end{equation*}
\end{definition}
In the case of effectivity functions, a
strict notion of morphism seems to be necessary to obtain the results
pertaining bisimulations. This is inspired by the transition from
homomorphisms to strong homomorphisms for Kripke models in modal
logics by saying ``that relational links are preserved
from the source model to the target \emph{and back
  again}''~\cite[p. 58]{Blackburn-Rijke-Venema}. We do not work with
relational links but rather with upward closed subsets, so this
strengthening happens like this:
\BeginDefinition{def:strong-morphism}
  Let $P:S\eTrans S$ and $Q:T\eTrans T$ be effectivity
  functions. A measurable map $f:S\to T$ is  a  \emph{strong morphism}
  from $P$ to $Q$ if it is 
  surjective 
  and the following holds   for all $s\in S$ and all $A\in\weak(S)$:
  \begin{equation}\label{eq:strong-morphism}
    A \in P(s) \iff \InvBild{(\SubProbSenza{f})}{B}  \subseteq A\text{ for some } B\in Q(f(s)).
  \end{equation}
\EndDefinition

Next we will prove two  features of strong morphisms in the
setting of standard Borel spaces.  The first one shows that strong morphisms are indeed
effectivity function morphisms, while the second 
expresses a form of
``backwards surjectivity'' of strong morphisms.
\BeginLemma{lem:strong}
  Let $P:S\eTrans S$ and $Q:T\eTrans T$ be effectivity
  functions on the standard Borel spaces $S$ and $T$, and  assume that
  $f:S\to T$ is a   strong morphism from $P$ to $Q$. Then
  \begin{enumerate}
  \item $f$ is an effectivity function morphism from $P$ to $Q$;
  \item The following holds for all $s\in S$ and $A\in\weak(S)$:
    \begin{equation}\label{eq:strong-onto}
      A \in P(s) \implies B
      \subseteq \Bild{(\SubProbSenza{f})}{A} \text{ for some } B\in Q(f(s)).
    \end{equation}
  \end{enumerate}
\EndLemma

\BeginProof
  For the first item, we have to check that 
  \begin{equation*}
    W\in Q(f(x)) \Longleftrightarrow \InvBild{(\SubProbSenza{f})}{W}\in P(x)
  \end{equation*}
  for all $s\in S$ and all $W\in w(\sigalg(T))$. Assume first that $W\in
  Q(f(s))$. In the definition of strong morphism, take $A:=(\subp
  f)^{-1}[W]$. We have, trivially, $(\subp f)^{-1}[W] \subseteq A$;
  hence by (\ref{eq:strong-morphism}) we obtain $A\in P(s)$, that is $
  (\subp f)^{-1}[W]\in P(s)$. For the other direction, suppose
  $(\subp f)^{-1}[W]\in P(s)$. By (\ref{eq:strong-morphism}) we know
  there exists some $B\in Q(f(s))$ such that $(\subp f)^{-1}[B]
  \subseteq(\subp f)^{-1}[W]$. By the proof of Lemma~\ref{gen-1}, we know
  that this is equivalent to $B\subseteq W$ since $f$ is
  surjective. But then $W\in Q(f(s))$ by upper-closedness.

  For the second item,  assume $A\in P(s)$. Since
  $f$ is strong, there must exist some $B\in 
  Q(f(s))$ such that $(\subp f)^{-1}[B]
  \subseteq A$. But then we might apply $\subp f$ on both sides of
  this inclusion and obtain
  \begin{equation}\label{eq:6}
    B  \subseteq \subp f[(\subp f)^{-1}[B]]   \subseteq \subp f[A],
  \end{equation}
  where the first  inclusion holds since $\subp f$ is onto by
  Lemma~\ref{gen-2}.
\EndProof

We require in both parts of the proof that we work in standard Borel
spaces, because we want to have that the image of a surjective Borel
map between standard Borel spaces under the subprobability functor is
onto again. 


\BeginTheorem{p:strong-morphisms-bisim}
  Let $S$ and $T$ be standard Borel spaces, let $P:S\eTrans \cones(S)$
  and $Q:T\eTrans\cones(T)$ be effectivity 
  functions, and  $f:P\to Q$ a strong
  morphism. The relation
  $R := G\cup G^\smallsmile$, where $G=\{\<s,f(s)\> : s \in S\}$ is the
  graph of $f$, is a state bisimulation on $P\oplus Q$.
\QED
\EndTheorem
\begin{proof}
  It's easy to see that $R$-closed subsets $A$ of the disjoint union
  $S\oplus T$  are characterized by 
  \begin{equation}\label{eq:4}
     A = f^{-1}[A] \uplus f[A].
  \end{equation}
  Note that pairs in $R$ have exactly one component in $S$ and the
  other in $T$. To check that $R$ is a state bisimulation, we
  first consider the case $s\mathrel{R}t$ with $s\in S$ and $t\in
  T$. We have to show
  \[\forall X\in (P\oplus Q)(s)\,\exists Y\in (P\oplus Q)(t):\forall \nu \in Y \,\exists
  \mu \in X. (\mu \mathrel{\bar R} \nu).\]
  In the case considered we have $t=f(s)$ and we might simply write
  \begin{equation*}\label{eq:5}
    \forall X\in P(s)\,\exists Y\in Q(f(s)):\forall \nu \in Y \,\exists
    \mu \in X. (\mu \mathrel{\bar R} \nu).
  \end{equation*}
  Let $X\in P(s)$. Since $f$ is strong, there must exist some $Y\in
  Q(f(s))$ such that $Y  \subseteq (\subp f)[X]$ by (\ref{eq:strong-onto}).
  Now take any $\nu \in Y$. By the previous inclusion, there exists some $\mu\in X$
  such that $\subp f(\mu) = \nu$. Let's check $\mu\bR\nu$.  Take a
  $R$-closed set $A$. We should have 
  \[\mu(A) = \nu(A)  = \subp  f(\mu) (A) = \mu(f^{-1}[A]),\]
  where the last equality holds by definition of $\subp f$. Now $\mu(A) = \mu(
  f^{-1}[A])+  \mu(f[A])$ by Equation~(\ref{eq:4}). But since $\mu$ is
  supported on $S$, we have $\mu(f[A]) = 0$. Hence $\mu(A) = \mu(
  f^{-1}[A])$ and we have proved this case. 

  Now assume $t\mathrel{R}s$. As before, $t=f(s)$ and we have to prove
  \begin{equation*}\label{eq:7}
    \forall Y\in Q(f(s)) \,\exists X\in P(s):\forall     \mu \in X\,\exists
    \nu \in Y . (\mu \mathrel{\bar R} \nu).
  \end{equation*}

  Let $Y\in Q(f(s))$. Since $f$ is a morphism, $(\subp f)^{-1}[Y] \in
  P(s)$ and we choose this preimage as our $X$. Now take $\mu\in
  X$. By definition of $X$, $\nu := \subp f (\mu)\in Y$, and we can
  perform the same calculation that solved the other case. 
\end{proof}
It should be emphasized that in this proof only the 
consequences of the strength of $f$ appearing in
Lemma~\ref{lem:strong} are used. Nevertheless, the full notion will
play a r\^ole while representing nondeterministic kernels as effectivity functions
in the next section.

It
is helpful at this point of the development to compare morphisms
with strong morphisms. Let $f: P\to Q$ be a morphism, and $F$ be a
strong morphism between $P$ and $Q$. The morphism $f$ satisfies the
condition 
\begin{equation*}
  G\in Q(f(s)) \text{ iff }\InvBild{\SubProb{f}}{G}\in P(s),
\end{equation*}
or, equivalently, 
\begin{equation*}
  Q(f(s)) = \{G\mid \InvBild{\SubProb{f}}{G}\in P(s)\},
\end{equation*}
so that $Q(f(s))$ is completely determined by $P(s)$ through $f$. The
strong morphism $F$, however, has to satisfy the condition
\begin{equation*}
  A \in P(s) \text{ iff }\exists B\in Q(F(s)):
  \InvBild{\SubProb{F}}{B}\subseteq A,
\end{equation*}
which means that $P(s)$ is the smallest upper closed subset of
$\sigalg(\weak(S))$ which contains the set
$\{\InvBild{\SubProb{F}}{B}\mid B\in Q(F(s))\}$ of inverse images of
elements in $Q(F(s))$ under $\SubProb{F}$. This is the manner in which the back
condition for Kripke models carries over.  

\subsection{Generation From Kernels}
\label{sec:representability}
In this section we will study the problem of generating an
effectivity function from a nondeterministic kernel. Essentially we ask for conditions under which a
hit-measurable map gives rise to a t-measurable map. 

Given a measurable set of measures on $S$, one can build a canonical
upper-closed set in $\cones(S)$: the principal filter generated by it
in the ordered set $(\weak(S), \subseteq)$. 
\begin{definition}
  Let $S$ be a measurable space. Given $W\in\weak(S)$, the
   \emph{filter generated by $W$} is the set
  \begin{equation*}
    \fil{W} := \{U \in \weak(S) \st W\subseteq U\}.
  \end{equation*}
  If $P:S\to \cones(S)$ and there exists $\ka:S\to\weak(S)$ such that
  $P(s) = \fil{\ka(s)}$ for all $s\in S$, we say that $P$ is
  \emph{filter-generated by} $\ka$. We will also say that $P$ is
  \emph{based on} $\ka$.

  Finally, call a measurable map $\ka:S\to\weak(S)$ a
  \emph{generating kernel} if $\fil{\ka}$ is
  t-measurable.
\end{definition}

We can state more formally the issues we are concerned with in this
section:
\begin{enumerate}
\item Showing that $\ka$ is a nondeterministic kernel if an effectivity function $P$ is filter-generated by $\ka$.
\item Investigating whether or not $\fil{\ka}$ is an effectivity
  function whenever $\ka$ is a nondeterministic kernel.
\end{enumerate}
We'll see that although we can easily answer the first question in the
affirmative, the second one is a little more tricky to deal with.

First, we'll check that filter-generated effectivity functions are
always based on a nondeterministic kernel.


\begin{prop}
  Let $\ka:S\to\weak(S)$.  If $\fil{\ka}$ is t-measurable, then
  $\ka$ is hit-measurable.
\end{prop}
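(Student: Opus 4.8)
The plan is to reduce the statement to the hit-measurability criterion of Lemma~\ref{l:H-meas_subseteq}. Here $\weak(S)$ is a $\sigma$-algebra on $\SubProb{S}$ (it is the weak $\sigma$-algebra $w(\sigalg(S))$), so that lemma, applied with $T := \SubProb{S}$ and this $\sigma$-algebra, says that $\ka$ is hit-measurable precisely when $\{s\in S : \ka(s)\subseteq W\}\in\sigalg(S)$ for every $W\in\weak(S)$. Thus all I must extract from t-measurability of $\fil{\ka}$ is that each of these ``subset sets'' lies in $\sigalg(S)$.

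The device is to promote a single measurable set $W\subseteq\SubProb{S}$ to a measurable subset of $\SubProb{S}\times[0,1]$ whose horizontal sections all recover $W$. Concretely, I would fix $W\in\weak(S)$ and take the rectangle $H := W\times[0,1]$. Then $H\in\bSS{w(\sigalg(S))}$, and for every $q\in[0,1]$ the section is $H_q = W$. Unwinding the principal filter, $H_q\in\fil{\ka}(s)$ means $W\in\fil{\ka(s)}$, that is, $\ka(s)\subseteq W$.

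Feeding this $H$ into Definition~\ref{t-measurability} for $P := \fil{\ka}$ yields
\begin{equation*}
  \{\langle s, q\rangle \mid \ka(s)\subseteq W\} \;=\; \{s\in S \mid \ka(s)\subseteq W\}\times[0,1] \;\in\; \bSS{\sigalg(S)}.
\end{equation*}
Since $\bSS{\sigalg(S)}$ is the product $\sigma$-algebra $\sigalg(S)\otimes\Borel{[0,1]}$, I may take the horizontal cut of this set at any $t\in[0,1]$; cuts of product-measurable sets are measurable (Section~\ref{sec:sigm-all-that}), and this cut is exactly $\{s\in S \mid \ka(s)\subseteq W\}$, whence the latter lies in $\sigalg(S)$. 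Lemma~\ref{l:H-meas_subseteq} then delivers hit-measurability of $\ka$.

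I expect no real obstacle. The only point to handle carefully is that t-measurability controls the joint set $\{\langle s, q\rangle \mid H_q\in P(s)\}$ ranging over all sections of an arbitrary product-measurable $H$, so I must choose $H$ so that its sections reproduce the given $W$ and then discard the auxiliary $[0,1]$-coordinate. Taking $H = W\times[0,1]$ makes \emph{every} section equal $W$, so the resulting set is a cylinder and the $[0,1]$-factor is stripped off immediately via measurability of cuts; localizing instead at one value of $q$ would drag the stray set $\{s \mid \ka(s)=\emptyset\}$ into the computation and make the bookkeeping clumsier.
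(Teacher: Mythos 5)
Your proof is correct and is essentially the paper's own argument: the paper likewise forms the cylinder $(\SubProb{S}\setminus A)\times[0,1]$ for a measurable $A$, applies t-measurability of $\fil{\ka}$, and takes a cut at $q=0$ to conclude that $\{s \mid \ka(s)\subseteq \SubProb{S}\setminus A\}$ is measurable. The only cosmetic difference is that you invoke Lemma~\ref{l:H-meas_subseteq} directly, while the paper inlines the equivalent complementation step (complements of hit-sets generate $\hit(\weak(S))$), which is exactly what that lemma encapsulates.
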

\BeginProof
  Since the complements of hit-sets generate $\hit(S)$, it suffices to
  show that the set $\ka^{-1}[\weak(S)\setminus\hitt_A]$ 
  is measurable for arbitrary  $A\in\weak(S)$. 
  \begin{align*}
    \ka^{-1}[\weak(S)\setminus\hitt_A] &= \{s\st  \ka(s) \cap A = \emptyset \}
    \\
    &= \{s\st   \ka(s) \subseteq (\subp(S)\setminus A)  \} \\
    &= \{s\st \subp(S)\setminus A \in \fil{\ka(s)}  \} \\
    &= \bigl\{\<s,q\> \st \bigl((\subp(S)\setminus A)\times[0,1]\bigr)_q \in \fil{\ka(s)} \bigr\}_0.
  \end{align*}
  But this last set is measurable, because it is the cut of a
  measurable set (the latter, by t-measurability of $\fil{\ka}$).
\EndProof

We now state a criterion for a mapping $\ka:S\to\weak(S)$  to generate
an effectivity function. It actually provides a representation of
t-measurability as a parametrized form of hit-measurability, relating
these forms of stochastic nondeterminism. We will see in
Proposition~\ref{too-bad-uncountable}, however, that t-measurability
is much stricter a condition than being generated by a nondeterministic kernel in
this way.

\BeginProposition{pr:parametrized-hit}
  Let $\ka:S\to\weak(S)$. $\fil{\ka}$ is t-measurable if and
  only if the map $\bar \ka(s,q) 
  \doteq \ka(s)\times\{q\}$ (i.e., $\ka\times\mathrm{id_{[0,1]}}$) is
  $\sigalg(S)$-$\hit(\weak(S)\otimes\B([0,1]))$ measurable.
\EndProposition

\begin{proof}
  Suppose
  $H\in\weak(S)\otimes\B([0,1])$. Then
  \begin{equation*}
    \bar \ka(s,q) \subseteq H   \iff  \ka(s)\times\{q\}\subseteq H 
    \iff \ka(s) \subseteq H_q 
    \iff H_q \in\fil{\ka(s)},
  \end{equation*}
  hence $\{\<s,q\> : \bar \ka(s,q) \subseteq H \} = \{\<s,q\> : H_q \in
  \fil{\ka(s)}\}$, and the equivalence follows from Lemma~\ref{l:H-meas_subseteq}.
\end{proof}

When one restricts the state spaces to be standard Borel, there is a
tight connection between the category of generating kernels and the
effectivity functions generated as filters by them. The definition of
$\fil{\cdot}$ can be extended to encompass arrows by stipulating
$\fil{f}:= f$ for every $\NK$-morphism, rendering it a functor with
some special properties.  We will state this more formally in the next
theorem. 

For its proof we need a lemma as a preparation. Here we go:

\BeginLemma{lem:NK-iff-strong}
  Let $S$ and $T$ be standard Borel spaces, and $\ka:S\to\weak(S)$ and
  $\ka':T\to\weak(T)$ generating  kernels. 
  A surjective measurable map $f:S\to T$ is a  strong morphism between $\fil{\ka}$
  and $\fil{\ka'}$ if and only if $f$ is a $\NK$-morphism between $\ka$
  and $\ka'$. 
\EndLemma

\begin{proof}
  We will show that the two notions involved are the same by
  successively considering equivalent formulations. We start by writing
  down the definition of strong morphism between  $\fil{\ka}$
  and $\fil{\ka'}$: for all $A$ and $s$,
  \begin{equation*}
A \in \fil{\ka(s)} \text{ iff }\exists B\in \fil{\ka'(f(s))} : (\subp f)^{-1}[B]  \subseteq A.
\end{equation*}
  By applying the definition of  $\fil{\cdot}$, this is equivalent to
  \begin{equation*}
\ka(s)\subseteq A  \text{ iff } \exists B : \underline{\ka'(f(s)) \subseteq B} \et
  (\subp f)^{-1}[B]  \subseteq A.
\end{equation*}
  Since $\subp f$ is onto, we know by the proof of Lemma~\ref{gen-2} that the
  underlined condition is equivalent to $(\subp f)^{-1}[\ka'(f(s))]
  \subseteq (\subp f)^{-1}[B]$. Hence the last displayed formula
  is equivalent to the next one.
  \begin{equation*}
\ka(s)\subseteq A  \text{ iff } \exists B : (\subp f)^{-1}[\ka'(f(s))]
  \subseteq (\subp f)^{-1}[B] \et  (\subp f)^{-1}[B]  \subseteq A.
\end{equation*}
This in turn is equivalent to
  \begin{equation*}
\ka(s)\subseteq A  \text{ iff } (\subp f)^{-1}[\ka'(f(s))]\subseteq A,
\end{equation*}
because of transitivity of the subset relation (for $\ent$) and by taking
  $B:=\ka'(f(s))$ (for $\tne$). Finally, this is the same as
  \begin{equation*}
\ka(s) = (\subp f)^{-1}[\ka'(f(s))],
\end{equation*}
since $A$ is universally quantified. This last equation is the
  definition of $\NK$-morphism.
\end{proof}

We are ready now to enter into a discussion of the functorial
properties of $\fil{\cdot}$.  By Lemma~\ref{lem:NK-iff-strong}, we
conclude at once that $\fil{\cdot}$ is well defined on arrows and is
full, because there is a one-one correspondence between $\NK$-arrows
and strong morphisms. Finally, $\fil{\cdot}$ trivially preserves
composition of morphisms since it is the identity on arrows. This establishes

\BeginTheorem{4-dot-eleven}
  $\fil{\cdot}$ is a full functor between the category of
  generating  kernels over standard Borel spaces and the one of
  filter-generated effectivity functions over standard Borel spaces
  with strong morphisms.
\QED
\EndTheorem

It can actually be proved that   $\fil{\cdot}$ is an isomorphism between the
categories above.

In the next proposition we state that $\fil{\cdot}$ preserves and
reflects state bisimulations. In order to make a distinction, we will
use the shorter name \emph{$\NK$-bisimulations} for the state
bisimulations on nondeterministic kernels. The proof is fairly
straightforward and requires only the definition of a bisimulation. 
\BeginProposition{p:fil-pres-bisim}
  Let $\ka:S\to\weak(S)$ be a generating  kernel and $R$ a
  binary relation on $S$. $R$ is a $\NK$-bisimulation on $\ka$ if and
  only if it is a bisimulation on $\fil{\ka}$.
\QED
\EndProposition



A very simple observation in the game theoretic context is that when
generating effectivity function $\fil{\ka}$ from $\ka$, we end up with
an effectivity function where there are essentially no choices for
Angel: we may say that $\fil{\cdot}$ \emph{demonizes} the set of
measures $\ka(s)$ for each $s$, since Demon is effective for each and
every element of $\ka(s)$. An alternative representation would be
given by ascribing each measure as effective for Angel; that is, each
singleton $\{\mu\}$ with $\mu\in \ka(s)$ should be effective for
Angel. Then this alternative representation would be given by the
\emph{angelization} functor
\[\bigl(\mathfrak{A}\ka\bigr)(s) :=  \textstyle\bigcup\{\fil{\{\mu\}} \st \mu\in \ka(s)\}.\]
This functor is defined as the identity on arrows. One may ask whether a
nondeterministic kernel $\ka$ is generating by $\fil{\cdot}$ if and only if
$\mathop{\mathfrak{A}}\ka$ is t-measurable. The answer follows easily
by using duals. We omit the proof of the following straightforward proposition.
\begin{prop}
  The functors $\fil{\cdot}$ and $\mathop{\mathfrak{A}}$ satisfy $\partial
  \mathop{\mathfrak{A}} = \fil{\cdot}$ and $\partial \fil{\cdot} =
  \mathop{\mathfrak{A}}$.
\QED
\end{prop}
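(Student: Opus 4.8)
The plan is to reduce both identities to a pointwise De Morgan computation, since all three operations involved act as the identity on morphisms: $\partial$ is the identity on $\EF$-arrows by definition, while $\fil{\cdot}$ and $\mathop{\mathfrak{A}}$ were both stipulated to be the identity on arrows. Consequently $\partial\mathop{\mathfrak{A}}$ and $\fil{\cdot}$ automatically agree on morphisms (both are the identity), and likewise for $\partial\fil{\cdot}$ and $\mathop{\mathfrak{A}}$; it therefore suffices to check the two equalities on objects, that is, on a kernel $\ka:S\to\weak(S)$ evaluated at an arbitrary state $s$.

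First I would make the angelization explicit. Unfolding $\fil{\{\mu\}} = \{U\in\weak(S)\mid \mu\in U\}$ and taking the union over $\mu\in\ka(s)$ gives
\begin{equation*}
  (\mathop{\mathfrak{A}}\ka)(s) = \{U\in\weak(S)\mid U\cap\ka(s)\neq\emptyset\},
\end{equation*}
so that angelization returns exactly the \emph{hit} sets of $\ka(s)$, whereas by definition
\begin{equation*}
  \fil{\ka(s)} = \{U\in\weak(S)\mid \ka(s)\subseteq U\}
\end{equation*}
collects the sets \emph{containing} $\ka(s)$. The only point requiring a word of care here is that $\{\mu\}$ be measurable, so that $\fil{\{\mu\}}$ is well defined and $(\mathop{\mathfrak{A}}\ka)(s)$ lands in $\Vau{S}$; over standard Borel spaces singletons are Borel, so this is unproblematic.

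For the first identity I would compute the dual of the angelization directly: by Definition~\ref{def:duality}, $D\in\partial(\mathop{\mathfrak{A}}\ka)(s)$ iff $D^c\notin(\mathop{\mathfrak{A}}\ka)(s)$, which by the description above means $D^c\cap\ka(s)=\emptyset$, i.e.\ $\ka(s)\subseteq D$; and this is precisely $D\in\fil{\ka(s)}$. Hence $\partial\mathop{\mathfrak{A}}=\fil{\cdot}$. The second identity then follows at once by applying $\partial$ to the first and invoking the involutivity $\partial^2=\mathrm{Id}_\EF$ established in the preceding proposition; alternatively one repeats the De Morgan step, observing that $D^c\notin\fil{\ka(s)}$ iff $\ka(s)\not\subseteq D^c$ iff $\ka(s)\cap D\neq\emptyset$, which recovers $D\in(\mathop{\mathfrak{A}}\ka)(s)$.

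There is no real obstacle here. The entire content is the observation that complementation turns the containment $\ka(s)\subseteq D^c$ into the nonempty-intersection condition $\ka(s)\cap D\neq\emptyset$, which is exactly the duality exchanging ``Demon is effective for each and every element of $\ka(s)$'' (the filter, or \emph{demonization}) with ``Angel is effective for some singleton from $\ka(s)$'' (the angelization). The only bookkeeping, as noted, is confirming measurability of the singletons so that the two maps are genuinely $\EF$-objects.
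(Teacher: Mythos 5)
Your proof is correct, and it matches what the paper intends: the paper actually omits this proof as ``straightforward,'' and the straightforward argument is precisely your pointwise De Morgan computation --- angelization yields the hit sets $\{U \in \weak(S) \mid U \cap \ka(s) \neq \emptyset\}$, the filter yields the containment sets $\{U \in \weak(S) \mid \ka(s) \subseteq U\}$, and complementation exchanges the two, with the arrow-level equality trivial since $\partial$, $\fil{\cdot}$, and $\mathfrak{A}$ are all the identity on morphisms. Your side remarks (deducing the second identity from $\partial^2 = \mathrm{Id}_\EF$, and the measurability caveat for singletons $\{\mu\}$ in $\SubProb{S}$ over standard Borel spaces) are sound additions rather than gaps.
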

A property expressed in terms of $\mathop{\mathfrak{A}}$ may be
rephrased as one about $\fil{\cdot}$, and vice versa. This also holds
for state bisimulations, though since it is not yet apparent how to
express them in terms of arrows, a proof is needed. %

The following fairly surprising observation indicates that t-measurability is much stricter
that hit-measurability: only image-countable kernels can be represented
as effectivity functions.
\BeginProposition{too-bad-uncountable}
  Let $S$ be a Polish space, $\ka:S\to \weak(S)$, and assume there
  exists $s_0\in S$  such that $\ka(s_0)$ is uncountable. Then
  $\fil{\ka}$ is not t-measurable.
\EndProposition
\begin{proof}{\def\xi{A}
  Let $\xi :=\ka(s_0)$. Since $S$ is Polish, $\subp(S)$ is standard
  Borel. $\xi\subseteq\subp(S)$ is an uncountable Borel set, and hence
  the trace 
  $
  (\xi,\weak(S)\cap \xi)
  $ 
  is an uncountable standard Borel
  space. Then we know there exists a measurable
  $H\subseteq \xi\times [0,1]$ such that $\pi_2[H]$ is an
  analytic non-Borel subset of $[0,1]$ (i.e.,
  $\pi_2[H]\notin\B([0,1])$), see~\cite[Theorem 4.1.5]{Srivastava}.

  If $\fil{\ka}$ were t-measurable, then $\{\langle s,q\rangle:
  (S\times[0,1]\setminus H)_q \in \fil{\ka(s)}\}$
  should belong  to $\B(S)\otimes\B([0,1])$, since
  $H^\comp = S\times[0,1]\setminus H\in\weak(S)\otimes\B([0,1])$. 
  Therefore, the set 
  appearing in the following calculation should be measurable:
  \begin{align*}
    (\{s_0\}\times [0,1])\setminus \bigl\{\<s,q\>\st  (H^\comp)_q \in
    \fil{\ka(s)}\bigr\} 
    & = \bigl\{\<s_0,q\>\st  (H^\comp)_q \notin \fil{\ka(s_0)}\bigr\} \\
    & = \bigl\{\<s_0,q\>\st  \ka(s_0) \nsubseteq (H^\comp)_q \bigr\} \\
    & = \bigl\{\<s_0,q\>\st  \xi \nsubseteq (H_q)^\comp \bigr\} \\
    & = \bigl\{\<s_0,q\>\st  \xi \cap H_q \neq \emptyset \bigr\} \\
    & = \bigl\{\<s_0,q\>\st  \exists \mu \in \xi :  \<\mu,q\> \in H\bigr\} \\
    & \stackrel{\text{(*)}}{=} \{s_0\} \times \pi_2[H].    
  \end{align*}
  where the equality (*) holds since $\pi_1[H]\subseteq \xi$. Since
  the last set has a non-Borel cut, it does not belong to 
  $\B(S)\otimes\B([0,1]) = \Borel{S\times[0, 1]}$, so $\fil{\ka}$ is not t-measurable.
}\end{proof}


We will show now that every image-finite kernel is generating. For
this, will need the following lemma, in which we check that every
image-finite kernel is given by stochastic relations. This is actually
an easy application of measurable selections.

\BeginLemma{l:finite-NLMP-Markov-kernels}
  Let $S$ be a Polish space and $\ka:S\to \weak(S)$ be a
  nondeterministic kernel
  such that for all $s\in S$, 
  $\ka(s)$ is finite. Then there exists a sequence $\Folge{K}$ of stochastic
  relations $K_n : S\Trans S$  such that $\ka(s) = \{K_n(s) \st n\in\N\}$
  for all $s$. 
\EndLemma

\begin{proof}
  The sets  $\ka(s)$ are closed  for all $s\in S$. 
Since $\ka$ is hit-measurable, we infer that for each open subset $G$
of $\SubProb{S}$ the set $\{s\in S \mid \ka(s)\cap G\not= \emptyset\}$
is measurable, thus there exists a Castaing representation $\Folge{K}$
for $\ka$ by the Selection Theorem~\ref{Himmelberg}. Clearly, each $K_n$ is a stochastic relation $S\Trans S$. 
\end{proof}
\begin{theorem}\label{th:image-finite-generating}
  Let $S$ be a Polish. Every image-finite
  nondeterministic kernel  $\ka:S\to \weak(S)$  is generating. 
\end{theorem}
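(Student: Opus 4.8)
The plan is to reduce t-measurability of $\fil{\ka}$ to a routine measurability statement about an associated sequence of stochastic relations. Since $S$ is Polish and $\ka$ is image-finite, Lemma~\ref{l:finite-NLMP-Markov-kernels} supplies a sequence $\Folge{K}$ of stochastic relations $K_n:S\Trans S$ with $\ka(s)=\{K_n(s)\st n\in\N\}$ for every $s$. This representation is exactly what makes the image-finite structure usable, since the condition ``$\ka(s)$ is contained in a given set'' becomes a countable conjunction over the values $K_n(s)$.

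First I would invoke Proposition~\ref{pr:parametrized-hit}, which says that $\fil{\ka}$ is t-measurable precisely when $\bar\ka(s,q)=\ka(s)\times\{q\}$ is $\sigalg(S)$-$\hit(\weak(S)\otimes\B([0,1]))$ measurable. Note $\bar\ka$ does land in the $\sigma$-algebra $\weak(S)\otimes\B([0,1])$ on $\SubProb{S}\times[0,1]$, as $\ka(s)\in\weak(S)$ and $\{q\}\in\B([0,1])$ form a measurable rectangle. Applying Lemma~\ref{l:H-meas_subseteq} to this $\sigma$-algebra, hit-measurability of $\bar\ka$ holds iff for every $H\in\weak(S)\otimes\B([0,1])$ the set $\{\langle s,q\rangle\st \bar\ka(s,q)\subseteq H\}$ lies in $\sigalg(S)\otimes\B([0,1])=\B(S)\otimes\B([0,1])$. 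So the whole claim is reduced to verifying this single containment.

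The key computation then rewrites that set using the representation of $\ka$. Since $\bar\ka(s,q)\subseteq H$ means $\ka(s)\times\{q\}\subseteq H$, equivalently $\langle K_n(s),q\rangle\in H$ for every $n$, we obtain
\begin{equation*}
\{\langle s,q\rangle\st \bar\ka(s,q)\subseteq H\}=\bigcap_{n\in\N}(K_n\times\mathrm{id}_{[0,1]})^{-1}[H].
\end{equation*}
Each $K_n$ is, by definition of a stochastic relation, $\sigalg(S)$-$w(\sigalg(S))$ measurable, so $K_n\times\mathrm{id}_{[0,1]}:S\times[0,1]\to\SubProb{S}\times[0,1]$ is $\B(S)\otimes\B([0,1])$-$\weak(S)\otimes\B([0,1])$ measurable; hence every preimage $(K_n\times\mathrm{id}_{[0,1]})^{-1}[H]$ belongs to $\B(S)\otimes\B([0,1])$. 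A countable intersection of such sets is again measurable, which establishes the required containment and therefore the theorem.

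I do not expect a genuine obstacle here: the real content was front-loaded into Lemma~\ref{l:finite-NLMP-Markov-kernels}, which is where the selection theorem and the Polish/image-finite hypotheses are actually spent. The only point needing care is the $\sigma$-algebra bookkeeping, namely that $K_n\times\mathrm{id}_{[0,1]}$ is measurable into the weak-$*$ $\sigma$-algebra tensored with $\B([0,1])$ and that $H$ is taken in exactly that product $\sigma$-algebra, so that preimages land back in $\B(S)\otimes\B([0,1])$; once this is kept straight the argument is purely routine.
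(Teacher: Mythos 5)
Your proof is correct, and it rests on the same pivotal step as the paper's: Lemma~\ref{l:finite-NLMP-Markov-kernels}, which converts the image-finite kernel into a countable family of stochastic relations $K_n$ via the selection theorem. The two arguments diverge only in how this representation is turned into t-measurability. The paper stays at the level of effectivity functions: it writes $\fil{\ka(s)}=\bigcap_{n}\fil{\{K_n(s)\}}$, quotes the fact that each $\fil{\{K_n\}}$ is t-measurable (this is the principal-filter case of Example~\ref{ex-stoch-effct-fncts}, imported from the cited literature rather than proved in the paper), and closes with Lemma~\ref{l:countable-union-EF} on countable intersections of t-measurable maps. You instead descend to the level of sets: via Proposition~\ref{pr:parametrized-hit} and Lemma~\ref{l:H-meas_subseteq} you reduce everything to showing that $\{\langle s,q\rangle \st \ka(s)\times\{q\}\subseteq H\}$ lies in $\B(S)\otimes\B([0,1])$ for each $H\in\weak(S)\otimes\B([0,1])$, and then exhibit this set as $\bigcap_{n}\bigl(K_n\times\mathrm{id}_{[0,1]}\bigr)^{-1}[H]$, a countable intersection of preimages under product-measurable maps. (You also correctly read the domain $\sigma$-algebra in Proposition~\ref{pr:parametrized-hit} as the product $\sigalg(S)\otimes\B([0,1])$, which is what its proof actually uses.) The trade-off: the paper's version is shorter and modular, reusing its calculus of t-measurable maps; yours is fully self-contained, since your single preimage computation simultaneously re-proves the two ingredients the paper only invokes --- that relation-induced effectivity functions are t-measurable, and that t-measurability survives countable intersections. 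Both are sound; yours makes explicit exactly where the measurability hypotheses are spent.
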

\begin{proof}
  By Lemma~\ref{l:finite-NLMP-Markov-kernels}, there are stochastic
  relations $K_n : S\Trans S$ ($n\in\N$)  such that
  \[\fil{\ka(s)}  = \textstyle\bigcap\{\fil{\{K_n(s)\}}\st n\in\N\}.\]
  for all $s\in S$.
  Since each $\fil{\{K_n\}}$ is t-measurable, by
  Lemma~\ref{l:countable-union-EF} $\fil{\ka}$ is t-measurable.
\end{proof}

\subsection{Two-Level Logic} 
\label{sec:two-level-logic}

We now introduce a modal logic, $\two$, that characterizes state
bisimilarity for finitary effectivity functions. As the logic for NLMP
appearing in \cite{Terraf-Bisim-MSCS}, it is divided in two levels:
one devoted to states and another one to measures.  Two operators serve as
a bridge between the levels. 

In terms of a game theoretic scenario,  we want to express in this
way that in a particular state, Angel has a strategy to attain a set
of measures. This is done by using a modality $\dia$. On the other
hand, we have quantitative operators of the form $[\phi\bowtie q]$,
that allow us to single out measures, which assign a value $\bowtie q$
to the set of states defined by the state formula $\phi$.

The $\dia$-modality is patterned after a necessity operator in
traditional modal logics (compare~\cite{hansen2003monotonic,Venema-Handbook}), but
instead of asserting that Angel can  ensure a set of states, it does
so with respect a set of measures. There is an intimate relation between
$\two$ and the stochastic version of game logic developed in
\cite{EED-GameLogic-TR}, but we won't focus on that matter in the
present paper.

In the following, $\phi$ will range over state formulas and $\psi$
will range over measure formulas. The logic is given by the following grammar,
\begin{align}
  \phi & \ ::== \ \top \ \mid \ \phi_1\land\phi_2 \ \mid \ \dia\psi \ \mid \ \Box\psi \label{eq:10}\\
    \psi & \ ::== \ \psi_1\land\psi_2  \ \mid \ \psi_1\lor\psi_2  \ \mid
  \ [\phi\bowtie q], \nonumber
\end{align}
where $\bowtie$ is $<$ or $>$ and $0\leq q< 1$ is rational. 

Fix a standard Borel space $S$ and let $P: S\eTrans S$ be an
effectivity function. The
interpretation of the propositional operators over $S$ is straightforward, 
so we focus on the other connectives. We define inductively
\begin{align*}
  \sem{\phi} & := \{s\in S \mid s\models \phi\},\\
\sem{\psi} & := \{\mu\in\SubProb{S}\mid \mu\models \psi\}
\end{align*}
for state formulas $\phi$ and measure formulas $\psi$, then
\begin{align*}
s\models \dia\psi &\iff \sem{\psi} \in P(s)\\
s\models\Box\psi& \iff \sem{\psi}\in\partial P(s)\\
\mu \models [\phi\bowtie q] &\iff \mu(\sem{\phi})\bowtie q.
\end{align*}
As usual, $\Box$ is defined dually from $\dia$, using duality on $P$ here. We will call
\emph{$\two$-formulas} the ones obtained from the first line of
productions (\ref{eq:10}), i.e., the state-formulas. It is plain that
there are only countably many $\two$-formulas.

Our next step will be to prove that for every effectivity function
$P$, the relation of logical equivalence is smooth, in the sense made
precise by the following

\BeginDefinition{smooth-equiv-relation}
An equivalence relation $\rho$ on the standard Borel space $S$ is called
\emph{smooth} (or \emph{countably generated}) iff there exists a
countable set $\Folge{C}$ of Borel sets such that
\begin{equation*}
  \isEquiv{s}{t}{\rho} \quad\text{ if and only if }\quad  \forall n\in \Nat: s\in C_{n}\Longleftrightarrow t\in C_{n}.
\end{equation*}
\EndDefinition

These relations have some desirable properties, among others it is
known that the factor space becomes an analytic space, and, more
important for us, that 
\begin{equation}\label{eq:3}
\Inv{\rho}{\Borel{S}} = \sigma(\{C_{n}\mid
n\in\Nat\}).
\end{equation}
This is established through the Blackwell-Mackey Theorem,
see~\cite[Theorem 4.5.7]{Srivastava}, and~\cite[Proposition 2.108]{EED-Meas} for a
discussion; note again that this relies on $S$ being a standard Borel space,
because the Blackwell-Mackey Theorem makes use of Suslin's Theorem,
which is not available in general measurable spaces. 
The characterization of invariant sets shows that a smooth relation permits a concise description of
its invariant Borel sets.

This is a necessary step towards the proof of
completeness for bisimilarity.
\begin{lemma}
  For every $\two$-formula $\phi$, $\sem{\phi}$ is measurable.
\end{lemma}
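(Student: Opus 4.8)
The plan is to prove \emph{simultaneously}, by induction on the two-level grammar, the following pair of statements: (i) for every state formula $\phi$ the set $\sem{\phi}$ belongs to $\sigalg(S)=\Borel{S}$, and (ii) for every measure formula $\psi$ the set $\sem{\psi}$ belongs to $w(\sigalg(S))$, i.e.\ $\sem{\psi}\in\weak(S)$ is a weakly measurable subset of $\SubProb{S}$. A single simultaneous induction is forced by the productions, since $\dia\psi$ and $\Box\psi$ build state formulas out of a measure formula while $[\phi\bowtie q]$ builds a measure formula out of a state formula; statement (ii) is the genuinely new content needed to make (i) go through, even though only (i) is asserted in the lemma.

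The Boolean clauses are immediate from the induction hypotheses together with the fact that $\Borel{S}$ and $w(\sigalg(S))$ are $\sigma$-algebras: $\sem{\top}=S$, $\sem{\phi_1\land\phi_2}=\sem{\phi_1}\cap\sem{\phi_2}$, $\sem{\psi_1\land\psi_2}=\sem{\psi_1}\cap\sem{\psi_2}$ and $\sem{\psi_1\lor\psi_2}=\sem{\psi_1}\cup\sem{\psi_2}$. The downward bridge $[\phi\bowtie q]$ is handled by the identity
\[
\sem{[\phi\bowtie q]} = \basS{S}{\sem{\phi}}{\bowtie q},
\]
whose right-hand side is a generator of $w(\sigalg(S))$ as soon as $\sem{\phi}\in\sigalg(S)$, which is exactly hypothesis (i) for the subformula $\phi$.

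The only substantial step, and the one that invokes t-measurability of $P$, is the upward bridge, i.e.\ the cases $\dia\psi$ and $\Box\psi$. For $\dia\psi$, set $W:=\sem{\psi}$, which lies in $w(\sigalg(S))$ by (ii), and form the cylinder $H:=W\times[0,1]\in\bSS{w(\sigalg(S))}$. Its horizontal sections are constant, $H_q=W$ for all $q$, so t-measurability of $P$ yields $\{\langle s,q\rangle\mid H_q\in P(s)\}\in\bSS{\sigalg(S)}$; since this set equals $\{s\mid W\in P(s)\}\times[0,1]$, taking a horizontal cut at any fixed $q$ gives $\sem{\dia\psi}=\{s\mid W\in P(s)\}\in\sigalg(S)$, because cuts of product-measurable sets are measurable. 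For $\Box\psi$ I unfold the dual: by the definition of $\partial P$, $\sem{\psi}\in\partial P(s)$ iff $\SubProb{S}\setminus\sem{\psi}\notin P(s)$, and $\SubProb{S}\setminus\sem{\psi}\in w(\sigalg(S))$ because $w(\sigalg(S))$ is closed under complementation. Applying the same cylinder argument to $\SubProb{S}\setminus\sem{\psi}$ shows $\{s\mid \SubProb{S}\setminus\sem{\psi}\in P(s)\}\in\sigalg(S)$, and $\sem{\Box\psi}$ is its complement in $S$, hence measurable as well.

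I expect no real obstacle beyond careful bookkeeping: the crux is to recognize that a membership condition of the form $W\in P(s)$ for a fixed measurable $W$ is exactly a horizontal cut of the t-measurable set attached to the cylinder $W\times[0,1]$, so that the single parameter-free condition can be extracted from the definition of t-measurability. Once that reduction is in place, everything else reduces to closure of $\sigalg(S)$ and $w(\sigalg(S))$ under the Boolean operations and the definition of the dual effectivity function.
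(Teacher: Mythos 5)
Your proof is correct and follows essentially the same route as the paper: a simultaneous structural induction over both levels of the grammar, extracting $\{s \mid \sem{\psi}\in P(s)\}$ from t-measurability of $P$ by forming the cylinder $\sem{\psi}\times[0,1]$ and taking a horizontal cut (exactly the paper's computation for $\dia\psi$), and recognizing $\sem{[\phi\bowtie q]}$ as the generator $\basS{\sigalg(S)}{\sem{\phi}}{\bowtie q}$ of the weak-*-$\sigma$-algebra. The only difference is that you treat $\Box\psi$ explicitly via the dual $\partial P$ and closure of $w(\sigalg(S))$ under complements, a case the paper's terse proof leaves implicit, so your version is if anything slightly more complete.
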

\begin{proof}
  We will see this by structural induction, proving that formulas of
  both productions are measurable in the respective spaces ($S$ and
  $\subp(S)$).

  Again, the only nontrivial cases are the ones with new
  operators. Assume $\sem{\psi}$ is measurable, i.e.,
  $\sem{\psi}\in\weak(S)$.
  \begin{equation*}
\sem{\dia\psi} = \{s \st s\models \dia\psi\} =  \{s \st
  \sem{\psi}\in P(s)\} = 
   \{\<s,q\> \st (\sem{\psi}\times[0,1])_q \in P(s)\}_0,
\end{equation*}
  but this last set is measurable since $P$ is t-measurable. 

  Now assume inductively that $\sem{\phi}$ is measurable. We have
  \begin{equation*} 
\sem{[\phi\bowtie q]} = \{\mu \st \mu(\sem{\phi})\bowtie q\} =
  \basS{\sigalg{S}}{\phi}{\bowtie q},
\end{equation*}
  again a measurable set (now in $\subp(S)$).
\end{proof}

\BeginCorollary{cor:two-smooth}
  For every  effectivity function $P:S\eTrans S$, the relation of
  $\two$-equivalence,
\begin{equation*}
s \eqlog t \iff \forall \phi\in\two.(s\models \phi \sii t\models
  \phi),
\end{equation*}
  is smooth.
\EndCorollary
\begin{proof}
  Immediate by the previous lemma and the observation that there are
  only countably many $\two$-formulas.
\end{proof}

In the next lemma we show that state bisimulations preserve $\two$-formulas.
\begin{lemma}\label{l:sb-preserve-2L}
  Let $P:S\eTrans S$ be an effectivity function, and let $R$ be a
  state bisimulation for $P$. For every $\two$-formula $\phi$,
  $\sem{\phi}$ is $R$-closed.
\end{lemma}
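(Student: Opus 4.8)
The plan is to prove the statement by a mutual structural induction over the two-level grammar, establishing two invariants in parallel: for every state formula $\phi$ the set $\sem{\phi}$ is $R$-closed, and for every measure formula $\psi$ the set $\sem{\psi}$ is \emph{$\bar R$-invariant}, i.e.\ $\mu\bR\nu$ implies $\mu\in\sem{\psi}\sii\nu\in\sem{\psi}$. Since $\sem{\phi}$ is already known to be measurable by the preceding lemma, the first invariant is equivalent to $\sem{\phi}\in\Inv{R}{S}$. The two invariants are genuinely coupled: the $\dia$- and $\Box$-cases on the state level will consume $\bar R$-invariance of a measure formula, while the atomic measure case $[\phi\bowtie q]$ will consume $R$-closedness of $\sem{\phi}$.

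The heart of the argument is a single claim isolating where the bisimulation hypothesis enters: \emph{if $A\in\weak(S)$ is $\bar R$-invariant, then $\{s\in S\mid A\in P(s)\}$ is $R$-closed.} To prove it I would take $A\in P(s)$ and $s\mathrel{R}t$; the definition of state bisimulation produces some $B\in P(t)$ such that for every $\nu\in B$ there is $\mu\in A$ with $\mu\bR\nu$. As $A$ is $\bar R$-invariant, each such $\nu$ already lies in $A$, whence $B\subseteq A$; upper-closedness of $P(t)$ then gives $A\in P(t)$, which is exactly $R$-closedness of $\{s\mid A\in P(s)\}$. Applying this with $A=\sem{\psi}$ settles $\sem{\dia\psi}=\{s\mid\sem{\psi}\in P(s)\}$. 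For the box I would rewrite $\sem{\Box\psi}=\{s\mid\sem{\psi}^{c}\notin P(s)\}$ by unfolding $\partial P$; since $\sem{\psi}^{c}$ is $\bar R$-invariant whenever $\sem{\psi}$ is, the claim applied to $A=\sem{\psi}^{c}$ shows $\{s\mid\sem{\psi}^{c}\in P(s)\}$ is $R$-closed, and its complement $\sem{\Box\psi}$ is then $R$-closed as well.

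The surviving cases are routine: $\sem{\top}=S$ and $\sem{\phi_1\land\phi_2}=\sem{\phi_1}\cap\sem{\phi_2}$ are $R$-closed, while $\sem{\psi_1\land\psi_2}$ and $\sem{\psi_1\lor\psi_2}$ are intersections and unions of $\bar R$-invariant sets and hence $\bar R$-invariant. The base measure step is where the coupling bites: if $\mu\bR\nu$ then $\mu$ and $\nu$ agree on $\Inv{R}{S}$, and by the inductive hypothesis $\sem{\phi}\in\Inv{R}{S}$, so $\mu(\sem{\phi})=\nu(\sem{\phi})$ and therefore $\mu\in\sem{[\phi\bowtie q]}\sii\nu\in\sem{[\phi\bowtie q]}$. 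I expect the only real subtleties to be the quantifier bookkeeping inside the central claim and the one honest appeal to symmetry of $R$ in the $\Box$-case, namely that the complement of an $R$-closed set is again $R$-closed, which holds precisely because $R$ is symmetric; everything else reduces to upper-closedness of the values $P(s)$ and to the defining property of $\bR$.
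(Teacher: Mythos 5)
Your proof is correct and follows essentially the same route as the paper's: a mutual structural induction showing $\sem{\phi}$ is $R$-closed and $\sem{\psi}$ is $\bR$-closed, with the bisimulation hypothesis entering exactly as in your central claim (the paper runs that argument inline for the $\dia$ case), and with the atomic case $[\phi\bowtie q]$ handled identically via agreement of $\bR$-related measures on sets in $\Inv{R}{S}$. The one place you go beyond the paper is the $\Box$ case: the paper dismisses everything except $\dia$ and $[\phi\bowtie q]$ on the grounds that $R$-closed sets are preserved by Boolean operations, but since measure formulas have no negation, $\sem{\Box\psi}=\{s\mid\sem{\psi}^{c}\notin P(s)\}$ is not a Boolean combination of sets already treated; your explicit argument --- apply the central claim to the $\bR$-invariant set $\sem{\psi}^{c}$, then use symmetry of $R$ to pass to the complement --- is precisely what is needed to close that small gap. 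Note also that your two-sided notion of $\bR$-invariance coincides with the paper's $\bR$-closedness, since $\bR$ (agreement on $R$-closed measurable sets) is symmetric by definition, so nothing is lost or gained by that reformulation.
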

\begin{proof}
0.
  The proof is done by showing inductively that
  \begin{enumerate}
  \item $\sem{\phi}$ is $R$-closed,
  \item $\sem{\psi}$ is $\bR$-closed.
  \end{enumerate}
  Since the family of $R$-closed sets are preserved by arbitrary
  Boolean operations, we only need to consider the modality $\dia$ and the
  $[\phi\bowtie q]$ construction.

1.
  Assume inductively that $\sem{\psi}$ is $\bR$-closed. Let's show
  $\sem{\dia\psi}$ is $R$-closed. Suppose $s\in\sem{\dia\psi}$, i.e.,
  $\sem{\psi}\in P(s)$, and $s\mathrel{R} t$. Since $R$ is a
  bisimulation, there exists $Y\in P(t)$ such that for any given
  $\nu\in Y$ there exists $\mu\in \sem{\psi}$ with $\mu \bR\nu$. 
  Since $\sem{\psi}$ is $\bR$-closed, we have $\nu\in\sem{\psi}$ for
  all $\nu\in Y$ and then $\sem{\psi}\in P(t)$. But this means
  $t\in\sem{\dia\psi}$, and we have this case.

2.
  Now assume inductively that $\sem{\phi}$ is $R$-closed; we'll see
  that $\sem{[\phi\bowtie q]}$ is $\bR$-closed. For this, assume
  $\mu\models[\phi\bowtie q]$ and $\mu\bR\nu$. But then
  $\mu(\sem{\phi})=\nu(\sem{\phi})$ and hence,
  \[\mu(\sem{\phi})\bowtie q \iff \nu(\sem{\phi})\bowtie q.\]
  This implies $\nu\models[\phi\bowtie q]$.
\end{proof}

We obtain immediately through Theorem~\ref{p:strong-morphisms-bisim} and
  Lemma~\ref{l:sb-preserve-2L}.
\begin{corollary}
  Strong morphisms preserve $\two$-formulas.
\QED
\end{corollary}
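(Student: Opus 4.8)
The plan is to read ``preserve'' as the assertion that a strong morphism $f:P\to Q$ between effectivity functions $P:S\eTrans S$ and $Q:T\eTrans T$ satisfies $s\models\phi \iff f(s)\models\phi$ for every $\two$-formula $\phi$ and every $s\in S$, satisfaction being taken on each side in the respective model. I would obtain this by combining the two results the text singles out, applied to the direct sum $P\oplus Q$.

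First I would form the direct sum $P\oplus Q$ on $S\oplus T$ and invoke Theorem~\ref{p:strong-morphisms-bisim}: since $f$ is a strong morphism, the relation $R := G\cup G^\smallsmile$ built from its graph $G=\{\langle s,f(s)\rangle : s\in S\}$ is a state bisimulation on $P\oplus Q$. Then I would apply Lemma~\ref{l:sb-preserve-2L} to the effectivity function $P\oplus Q$ together with this bisimulation $R$, concluding that $\sem{\phi}$ is $R$-closed for every $\two$-formula $\phi$, where $\sem{\phi}\subseteq S\oplus T$ is the extension computed in the direct-sum model.

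The key step is to cash this out at the level of states. Because $\langle s,f(s)\rangle\in G\subseteq R$ and $R$ is symmetric, $R$-closedness of $\sem{\phi}$ forces $s\in\sem{\phi}\iff f(s)\in\sem{\phi}$ for all $s\in S$: from $s\in\sem{\phi}$ and $s\mathrel R f(s)$ one gets $f(s)\in\sem{\phi}$, and the converse follows from $f(s)\mathrel R s$. This is exactly preservation (and, by symmetry, reflection) of $\two$-formulas along $f$.

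The one point that needs care --- and the main obstacle --- is that the interpretation in Lemma~\ref{l:sb-preserve-2L} is carried out inside the single model $P\oplus Q$, so I must check that the satisfaction relation of the direct sum restricted to $S$ (respectively $T$) coincides with satisfaction in $P$ (respectively $Q$). This amounts to observing that the direct-sum construction does not mix the two components: for $s\in S$ one has $(P\oplus Q)(s)=P(s)$, and the measures relevant to evaluating $\dia\psi$ and $[\phi\bowtie q]$ at $s$ are those supported on $S$, exactly as in the proof of Theorem~\ref{p:strong-morphisms-bisim}, where the vanishing term $\mu(f[A])=0$ encoded this separation. Granting this routine verification, the corollary is immediate.
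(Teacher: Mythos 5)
Your proposal is correct and is exactly the paper's argument: the text derives the corollary ``immediately'' by combining Theorem~\ref{p:strong-morphisms-bisim} (the graph of a strong morphism yields a state bisimulation on $P\oplus Q$) with Lemma~\ref{l:sb-preserve-2L} ($\sem{\phi}$ is $R$-closed for state bisimulations), just as you do. Your extra unpacking --- reading off $s\in\sem{\phi}\iff f(s)\in\sem{\phi}$ from $R$-closedness and checking that the direct-sum semantics restricts componentwise --- only makes explicit what the paper leaves implicit.
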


We are now in a position to establish a partial converse of the previous lemma.    $\two$ is
actually complete
for state bisimilarity on effectivity functions, which are
finitary in the sense made precise below. This concept generalizes the
notion of \emph{uniformly finitary effectivity functions}, already used
in the non-probabilistic context \cite{Pauly-Parikh}.
\begin{definition}\label{def:finitary}
  An effectivity function $P:S\eTrans S$ is   
  \begin{enumerate}
  \item \emph{finitary}
    if for
    each $s\in S$ there exists a finite family $E(s) := \{\ka_i(s) : i=1,\dots,n_s\}$
    with each $\ka_i(s)\subseteq\weak(S)$ finite such that 
    \begin{equation*}
P(s) = \textstyle\bigcup\{\fil{\ka_i(s)} : i =1,\dots,n_s\}.
\end{equation*}

  \item \emph{finitely supported} if it is finitary with $n_s=1$ for all
    $s\in S$.
  \end{enumerate}
\end{definition}
It is immediate from the definition that finitely supported
effectivity functions are exactly the ones which are filter-generated
by an image-finite kernel. Also, it may be interesting to note that both
finitary and filter-generated are related to the notion of
\emph{core-completeness} appearing in \cite{hansen2003monotonic}.

The next theorem generalizes one of the main results of
\cite{DWTC09:qest,Terraf-Bisim-MSCS}, the logical characterization of
bisimilarity for image-finite nondeterministic kernels, since every
such kernel encodes an effectivity function by
Theorem~\ref{th:image-finite-generating}. In the following, 
we use $\sem{\two}$ to denote the set of all extensions of
$\two$-formulas, that means, 
\[\sem{\two} = \{\sem{\phi} \st \phi \in \two\}.\]
\begin{theorem}
  Let $P: S\eTrans S$ be a finitary effectivity function. Two states that satisfy
  exactly the same $\two$-formulas are bisimilar.
\end{theorem}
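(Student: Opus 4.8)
The plan is to show that logical equivalence $\eqlog$ is itself a state bisimulation on $P$; since $s\eqlog t$ by hypothesis, this immediately gives $s\sbisim t$. Symmetry of $\eqlog$ is automatic, so the content is the transfer condition of Definition~\ref{state-bisim-eff-fncts}. As a preliminary I would pin down the lifted relation $\overline{\eqlog}$ on $\SubProb{S}$: by Corollary~\ref{cor:two-smooth} the relation $\eqlog$ is smooth, generated by the countable family $\sem{\two} = \{\sem{\phi}\mid \phi\in\two\}$, so by~(\ref{eq:3}) we have $\Inv{\eqlog}{\Borel{S}} = \sigma(\sem{\two})$. Since $\sem{\two}$ is closed under finite intersections ($\sem{\phi_1\land\phi_2}=\sem{\phi_1}\cap\sem{\phi_2}$) and contains $\sem{\top}=S$, Theorem~\ref{pi-lambda} (uniqueness of measures on a generating $\pi$-system) yields that $\mu\mathrel{\overline{\eqlog}}\nu$ holds precisely when $\mu(\sem{\phi})=\nu(\sem{\phi})$ for every state formula $\phi$. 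In particular $\mu\not\mathrel{\overline{\eqlog}}\nu$ iff some $\phi\in\two$ separates them numerically.

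Next I would set up the combinatorics afforded by finitarity. Fix $s\eqlog t$ and $A\in P(s)$. By Definition~\ref{def:finitary}, $P(s)=\bigcup_{i=1}^{n_s}\fil{\ka_i(s)}$, so $A\supseteq\ka_{i_0}(s)$ for some $i_0$; since enlarging $A$ only weakens the transfer condition, it suffices to find, for this $i_0$, an index $j_0\le n_t$ such that every $\nu\in\ka_{j_0}(t)$ is matched by some $\mu\in\ka_{i_0}(s)$ with $\mu\mathrel{\overline{\eqlog}}\nu$. Then $B:=\ka_{j_0}(t)$ lies in $\fil{\ka_{j_0}(t)}\subseteq P(t)$ (each $\ka_j(t)$ is a finite, hence measurable, subset of the standard Borel space $\SubProb{S}$), and the matching $\mu$ lives in $\ka_{i_0}(s)\subseteq A$, which is exactly what Definition~\ref{state-bisim-eff-fncts} requires.

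The heart of the argument is producing such a good $j_0$, and I would do it by contradiction using the logic. Suppose every $j\le n_t$ were bad, i.e.\ for each $j$ there is a witness $\nu_j\in\ka_j(t)$ with $\mu\not\mathrel{\overline{\eqlog}}\nu_j$ for all $\mu\in\ka_{i_0}(s)=\{\mu_1,\dots,\mu_m\}$. By the preliminary characterization, for every pair $(k,j)$ there is a state formula $\phi_{kj}$ with $\mu_k(\sem{\phi_{kj}})\neq\nu_j(\sem{\phi_{kj}})$; choosing a rational threshold strictly between the two values (a routine matter given $0\le q<1$ and density of $\Rational$) yields an atomic measure formula $[\phi_{kj}\bowtie q_{kj}]$ satisfied by $\mu_k$ but not by $\nu_j$. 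Setting $\psi_k:=\bigwedge_{j=1}^{n_t}[\phi_{kj}\bowtie q_{kj}]$ and $\psi:=\bigvee_{k=1}^{m}\psi_k$ produces a legal measure formula with $\mu_k\models\psi$ for all $k$, while no $\nu_j$ satisfies any $\psi_k$, hence $\nu_j\not\models\psi$. Consequently $\ka_{i_0}(s)\subseteq\sem{\psi}$ gives $\sem{\psi}\in\fil{\ka_{i_0}(s)}\subseteq P(s)$, so $s\models\dia\psi$; but each $\ka_j(t)$ contains $\nu_j\notin\sem{\psi}$, so $\ka_j(t)\nsubseteq\sem{\psi}$ for all $j$, whence $\sem{\psi}\notin P(t)$ and $t\not\models\dia\psi$. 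Since $\dia\psi\in\two$, this contradicts $s\eqlog t$.

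Therefore some $j_0$ is good, and running this for every $A\in P(s)$ and every related pair verifies the transfer condition in both directions (the set-up is symmetric in $s$ and $t$). Hence $\eqlog$ is a state bisimulation and $s\eqlog t$ implies $s\sbisim t$. The main obstacle is precisely the separation step in the third paragraph: because measure formulas are negation-free one cannot separate a single pair of measures directly, and it is the double finiteness --- finitely many Angelic alternatives $n_t$ and finite supports $\ka_i$ --- that lets the finite conjunction-of-disjunctions $\psi$ carve out the correct invariant set; the standard-Borel and smoothness hypotheses are what make $\overline{\eqlog}$ coincide with numerical agreement on $\sem{\two}$ in the first place.
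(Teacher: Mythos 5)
Your proof is correct and follows essentially the same route as the paper's: show that logical equivalence is itself a state bisimulation, using smoothness together with the $\pi$-$\lambda$ theorem to extract separating state formulas for non-equivalent measures, and using finitarity to assemble these into a single $\bigvee\bigwedge$ measure formula that yields a contradiction. The only difference is cosmetic: you orient the rational thresholds so that the measures in $\ka_{i_0}(s)$ satisfy the formula and conclude with $\dia\psi$ distinguishing $s$ from $t$, whereas the paper orients them the other way and phrases the contradiction with the dual modality $\Box$.
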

\begin{proof}
  Since by Corollary~\ref{cor:two-smooth} the equivalence relation $\approx$ is smooth, we obtain
  $
    \Inv{\approx}{\sigalg(S)}=\sig(\sem{\two})
  $
from ~(\ref{eq:3}).  The
  result will follow if we show that $\approx$ is a state 
  bisimulation.

  We proceed by way of contradiction. That is, assume $s\approx t$ and
  there exists  $A \in P(s)$ so that for any given $B\in P(t)$ there
  exists $\nu\in B$ such that $\mu\centernot{\overline{\approx}} \nu$
  holds for all $\mu\in A$. 
  Since $P$ is finitary, there must exist a finite $A_0 = \{\mu_i:
  i\in I\}\in P(s)$ such
  that $A_0\subseteq A$. Also, there are finitely many (finite)
  $B_j$ such that $P(t) =  \textstyle\bigcup_j\fil{B_j}$. We enumerate
  the $\nu$'s accordingly. Hence we can find $A_{0}\in P(s)$ such that
  for all $B_{j}\in P(t)$ the following holds
  \begin{equation}
\exists \nu_j\in B_j \, \,\forall \mu_i\in
    A_0. (\mu_i\centernot{\overline{\approx}} \nu_j).
  \end{equation}
  Now, $\mu_i\centernot{\overline{\approx}} \nu_j$ if and only if
  there exists some $Q\in\Inv{\approx}{\sigalg(S)}$  such that
  $\mu_i(Q)\neq\nu_j(Q)$.  By (\ref{eq:3}), we may choose
  $Q\in\sig(\sem{\two})$ and hence $\mu_i$ and $\nu_j$ differ on
  $\sig(\sem{\two})$. But since $\sem{\two}$ is a generator of
  $\sig(\sem{\two})$ which is closed under finite intersections,  the $\pi$-$\lda$ Theorem~\ref{pi-lambda} ensures that  there
  are $\two$-formulas $\phi_{ij}$ witnessing the fact that $\mu_i$ and
  $\nu_j$ are different:
  \[ \nu_j(\sem{\phi_{ij}})\neq    \mu_i(\sem{\phi_{ij}}).\]
  Without loss of generality we might state that we can find $A_{0}\in
  P(s)$ such that for all $B_{j}\in P(t)$ this is true:
  \begin{equation*}
    \exists \nu_j\in B_j \, \,\forall \mu_i\in
    A_0. 
     \nu_j(\sem{\phi_{ij}}) \bowtie_{ij} q_{ij} \bowtie_{ij}
    \mu_i(\sem{\phi_{ij}})
  \end{equation*}
for some $q_{i, j}$. 
  This can be expressed by a $\two$-formula, as follows:
  \begin{equation*}
t\models\Box\textstyle\bigvee_j\textstyle\bigwedge_i [\phi_{ij} \bowtie_{ij} q_{ij}]
\end{equation*}
  But then $s$ does not satisfy this formula, and we reach a 
  contradiction since we assumed $s\approx t$. 
\end{proof}

The\MMP{new} proof shows a strong resemblance to the familiar proof of the
Hennessy-Milner Theorem, see,
e.g.~\cite[p. 69]{Blackburn-Rijke-Venema}, by exploiting the assumption
of being finitely supported, identifying for each of these finite
cases a finite number of culprits and then, using the logic's
finitary constructors, constructing one violating formula.

Since $\two$-equivalence is smooth, we obtain the following
\begin{corollary}\label{c:bisimilarity-smooth}
  For every finitary effectivity function $P$, state bisimilarity on $P$ is
  smooth.
\end{corollary}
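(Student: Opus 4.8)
The plan is to show that, for a finitary $P$, state bisimilarity coincides \emph{exactly} with $\two$-equivalence $\eqlog$, and then to read off smoothness directly from Corollary~\ref{cor:two-smooth}. Writing $\bisim$ for state bisimilarity on $P$ (the union of all state bisimulations relating two given states), I would establish the two inclusions $\bisim\subseteq{\eqlog}$ and $\eqlog\subseteq{\bisim}$ separately and then combine them.

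For $\bisim\subseteq{\eqlog}$ I would argue as follows. Suppose $s\bisim t$, so there is a state bisimulation $R$ with $s\mathrel{R}t$. By Lemma~\ref{l:sb-preserve-2L}, $\sem{\phi}$ is $R$-closed for every $\two$-formula $\phi$. Since $R$ is symmetric, $s\mathrel{R}t$ together with $R$-closedness of $\sem{\phi}$ forces $s\in\sem{\phi}\Rightarrow t\in\sem{\phi}$, and using $t\mathrel{R}s$ the converse implication as well; hence $s\models\phi\iff t\models\phi$ for all $\two$-formulas, i.e. $s\eqlog t$. Note that this direction needs no finiteness hypothesis on $P$.

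For $\eqlog\subseteq{\bisim}$ I would invoke the theorem immediately preceding this corollary: under the finitary assumption, its proof in fact shows that $\eqlog$ is itself a state bisimulation on $P$, so every $\eqlog$-related pair is $\bisim$-related. Combining the two inclusions yields $\bisim={\eqlog}$. Since $\eqlog$ is an equivalence relation and is smooth by Corollary~\ref{cor:two-smooth}, this equality transports both properties to $\bisim$, giving the claim at once.

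The only point that deserves care — and the one I would flag as the real, if modest, obstacle — is that smoothness is a property of \emph{equivalence} relations (Definition~\ref{smooth-equiv-relation}), whereas an individual state bisimulation is only required to be symmetric. It is precisely the identification $\bisim={\eqlog}$ that guarantees that state bisimilarity is transitive (it inherits transitivity from $\eqlog$), and one checks separately that it is reflexive because the identity relation is a state bisimulation and symmetric because a union of symmetric relations is symmetric; hence $\bisim$ is a bona fide equivalence relation to which Definition~\ref{smooth-equiv-relation} applies.
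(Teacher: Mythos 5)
Your proposal is correct and is essentially the paper's own argument: the corollary there follows from the preceding theorem (giving ${\eqlog}\subseteq{\bisim}$) together with Lemma~\ref{l:sb-preserve-2L} (giving ${\bisim}\subseteq{\eqlog}$), so that state bisimilarity coincides with $\two$-equivalence, whose smoothness is Corollary~\ref{cor:two-smooth}. Your additional remark that the identification ${\bisim}={\eqlog}$ is what makes bisimilarity an equivalence relation, so that Definition~\ref{smooth-equiv-relation} even applies, is a point the paper leaves implicit but is exactly the right thing to check.
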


In the next section, we will make a first approach into the study of
the notions of bisimilarity and behavioral equivalence from a
coalgebraic perspective. The main tool would be that of
\emph{subsystems}, a refined version of event bisimulations suited for
effectivity functions. The link to the material in the present section is
given by the last Corollary, since smooth bisimulations are proved to
induce subsystems.

\section{Subsystems: A Coalgebraic Approach}
\label{sec:coalgebraic-approach}
Given a measurable space $X$, assume that ${\cal C}$ is a
sub-$\sigma$-algebra of $\sigalg(X)$; denote the measurable space $(X,
{\cal C})$ by $X_{{\cal C}}$, so that $\sigalg(X_{{\cal C}}) = {\cal
  C}$. Since $\sigalg(X_{{\cal C}})\subseteq\sigalg(X)$, we see that
the identity $i_{{\cal C}}: X \to X_{{\cal C}}$ is measurable. Now
assume that $P: X \eTrans X$ is an effectivity function. The
restriction of $P$ to ${\cal C}$ focuses $P$ on the events described
in ${\cal C}$, provided this is algebraically possible. This leads to
the notion of a subsystem, formally:

\BeginDefinition{subsystem}
Given a stochastic effectivity function $P$ on the measurable space
$X$, a sub-$\sigma$-algebra $\mathcal{C}\subseteq\sigalg(X)$ defines a
\emph{subsystem of $P$} iff we can find a stochastic effectivity
function $P_{\mathcal{C}}$ on $X_{{\cal C}}$ such that
$i_{\mathcal{C}}$ defines a morphism $P\to P_{\mathcal{C}}$.
\EndDefinition

Assume that $\mathcal{C}$ defines a subsystem of $P$, then this
diagram commutes
\begin{equation}
\label{lab:subsyst}
\xymatrix{
X\ar[rr]^{i_{\mathcal{C}}}\ar[d]_{P}&&X_{{\cal C}}\ar[d]^{P_{\mathcal{C}}}\\
\Vau{X}\ar[rr]_{\Vau{i_{\mathcal{C}}}}&&\Vau{X_{\mathcal{C}}}
}
\end{equation}
This means that we have
\begin{equation*}
  P_{\mathcal{C}}(x) = \{D\in w(\sigalg{X}) \mid
  \InvBild{(\SubProbSenza i_{\mathcal{C}})}{D}\in P(x)\}
\end{equation*}
for each $x\in X$. Hence this can be used as a definition for $P_{\mathcal{C}}$. Note that
 $(\SubProbSenza i_{\mathcal{C}})(\mu)$ is the restriction of subprobability
$\mu\in\SubProb{X}$ to the $\sigma$-algebra
$\mathcal{C}$. The true catch is of course that $P_{\mathcal{C}}$ has
to be t-measurable.

So one might ask whether there exist subsystems at all. Before
answering this\MMP{This is new.}, we shall make a brief excursion into congruences for
stochastic relations. Let $X$ and $Y$ be standard Borel spaces and $K:
X \Trans Y$ be a stochastic relation. A congruence $(\alpha, \beta)$
for $K$ is a pair of smooth equivalence relations $\alpha$ on $X$ and
$\beta$ on $Y$ such that there exists a relation $K_{\alpha, \beta}:
\Faktor{X}{\alpha}\Trans \Faktor{Y}{\beta}$ such that this diagram
commutes:
\begin{equation*}
\xymatrix{
X\ar[d]_{K}\ar[rr]^{\fMap{\alpha}} && \Faktor{X}{\alpha}\ar[d]^{K_{\alpha, \beta}}\\
\SubProb{Y}\ar[rr]_{\SubProb{\fMap{\beta}}} && \SubProb{\Faktor{Y}{\beta}}
}
\end{equation*}
It is not difficult to see that this condition is equivalent to saying
that $K': (X, \Inv{\alpha}{X})\Trans (Y, \Inv{\beta}{Y})$ is a
stochastic relation, where $K'(x)$ is the restriction of $K(x)$ to the
$\sigma$-algebra $\Inv{\beta}{Y}$ of $\beta$-invariant
sets~\cite[Exercise~21]{EED-Meas}. Stating this formally, $K'(x) =
(\SubProb{i_{\beta}}\circ K)(x)$, where $i_{\beta}: y\mapsto y$ is the
injection, which yields a measurable map $(Y, \Borel{Y})\to (Y,
\Inv{\beta}{Y})$. Putting $X_{\alpha} := (X,
\Inv{\alpha}{X})$, similarly, for $Y_{\beta}$, this is equivalent to saying that this diagram
\begin{equation*}
\xymatrix{
X\ar[d]_{K}\ar[rr]^{i\alpha} && X_{\alpha}\ar[d]^{K'}\\
\SubProb{Y}\ar[rr]_{\SubProb{i_{\beta}}} && \SubProb{Y_{\beta}}
}
\end{equation*}
commutes. This is the diagram for stochastic relations which corresponds to
diagram~(\ref{lab:subsyst}). Thus the subsystems for stochastic
relations are exactly the congruences (and hence, for relations of the
form $K: S\Trans S$, the event bisimulations)\MMP{end new.}. 

Returning to the main stream of our discussion, we observe that surjective morphisms provide a rich source of
subsystems for the case of standard Borel spaces.
Before we can state and prove this, we need this auxiliary technical statement.

\BeginLemma{inv-under-i}
Let $X$ and $Y$ be standard Borel and $f: X\to Y$ be
measurable and surjective. Then $\InvBild{(\SubProbSenza i_{\Xf})}{D}$ is
$\SubProbSenza f$-invariant for each measurable subset $D$ of
$\SubProb{X_{\Xf}}$. 
\EndLemma

\BeginProof
We look at all sets which satisfy the assertion and show that these
sets comprise all measurable subsets $D$ of
$\SubProb{X_{\Xf}}$.  

In fact, let 
\begin{equation*}
  \mathcal{H} := \{D\subseteq \SubProb{\ul{Xf}}\mid \InvBild{(\SubProbSenza i_{\Xf})}{D}\text{ is $\SubProbSenza
    f$-invariant}\},
\end{equation*}
then this is a $\sigma$-algebra (see the proof of~\cite[Lemma
3.1.6]{Srivastava}), hence it is enough to show that $
\basS{\Sigma_{f}}{A}{q}\in\mathcal{H} $ whenever
$A\in\Sigma_{f}$. Then it will follow that the $\sigma$-algebra
generated by these sets is contained in ${\cal H}$, and these are all
measurable subsets of $\SubProb{X_{\Xf}}$.

Now assume that $(\SubProbSenza i_{\Xf})(\mu)\in
\basS{\Sigma_{f}}{A}{q}$, and assume that $\langle \mu, \mu'\rangle
\in\Kern{\SubProbSenza f}$, hence that  $(\SubProbSenza f)(\mu) =
(\SubProbSenza f)(\mu')$. Because $\InvBild{(\SubProbSenza
  i_{\Xf})}{\basS{\Sigma_{f}}{A}{q}}$ equals
$\basS{X}{A}{q}$, and because $A = \InvBild{f}{G}$ for some
Borel set $G\subseteq Y$ by Corollary~\ref{is-borel-conseq}, we infer
that
\begin{equation*}
  \mu'(A) = \mu'(\InvBild{f}{G}) = (\SubProbSenza f)(\mu')(G) =
  (\SubProbSenza f)(\mu)(G)
= \mu(A) \geq q,
\end{equation*}
hence $(\SubProbSenza
i_{\Xf})(\mu')\in\basS{\Sigma_{f}}{A}{q}$. Thus 
$
\basS{\Sigma_{f}}{A}{q}\in\mathcal{H},
$ 
and we are done.
\EndProof

This permit us to show that surjective morphisms define in fact
subsystems on their domains, i.e., on the effectivity functions which
serve as their source.

\BeginProposition{cont-def-subsystem}
Let $X$ and $Y$ be standard Borel spaces and $P: X\eTrans X$ resp.\ $Q: Y
\eTrans Y$ be stochastic
effectivity functions. A surjective morphism $f: P\to Q$ defines a
subsystem $\Sigma_{f}$ of $P$. In particular, defining 
\begin{equation*}
  P_{f}(x) := \{A\in w(\Xf) \mid
  \InvBild{\SubProb{i_{\Xf}}}{A}\in P(x)\}.
\end{equation*}
yields an effectivity function $P_{f}: X\eTrans (X, \Sigma_{f})$, and
$(id_{X}, id_{\Sigma_{f}}): P\to P_{f}$ is a morphism. 
\EndProposition

\BeginProof
1.  Given $H\in w(\Sigma_{f})\otimes[0, 1]$, we have to show
that the set 
\begin{equation*}
T_{H} := \{\langle x, t\rangle \mid
\InvBild{(\SubProbSenza i_{\Sigma_{f}})}{H_{t}}\in P(x)\}
\end{equation*}
is a member of
$\Sigma_{f}\otimes\Borel{[0, 1]}$. Since $P$ is an effectivity function, we
know that $T_{H}$ is a Borel set in $X\times[0, 1]$, so by
Lemma~\ref{is-borel-conseq-lem} it is enough to show that $T_{H}$
is $(f\times id_{[0, 1]})$-invariant.

2.  Given $t\in[0, 1]$, we know that $\InvBild{(\SubProbSenza
  i_{\Sigma_{f}})}{H_{t}}\in w(\Borel{X})$, by construction, and this
set is $\SubProbSenza f$-invariant by Lemma~\ref{inv-under-i}. Now for
showing the invariance property of $T_{H}$, we take $\langle x,
t\rangle \in T_{H}$ with $f(x) = f(x')$. Since
$\InvBild{(\SubProbSenza i_{\Sigma_{f}})}{H_{t}}\in P(x)$, and since this set
is $\SubProbSenza(f)$-invariant, we know that we can write $
\InvBild{(\SubProbSenza i_{\Sigma_{f}})}{H_{t}} = \InvBild{(\SubProbSenza
  f)}{G} $ for some $G\in \Borel{Y}$. Hence
\begin{align*}
 \InvBild{(\SubProbSenza i_{\Sigma_{f}})}{H_{t}} \in P(x) 
& \Leftrightarrow
 \InvBild{(\SubProbSenza f)}{G} \in P(x) \\
& \Leftrightarrow
G\in Q(f(x))&& \text{ since $f: P\to Q$ is a morphism}\\
& \Leftrightarrow
G\in Q(f(x'))&& \text{ since $f(x) = f(x')$}\\
& \Leftrightarrow
 \InvBild{(\SubProbSenza i_{\Sigma_{f}})}{H_{t}} \in P(x')
\end{align*}
Thus $\langle x', t\rangle\in T_{H}$.
\EndProof

Again\MMP{This is new.}, the situation is compared to stochastic relations and, by
implication, to their event bisimulations. Congruences and kernels of
surjective morphisms are closely related in this case, and the
discussion above has shown that congruences are nothing but subsystems
in disguise (or vice versa). This means that surjective morphisms are
in any case a resource from which we may harvest subsystems. Of
course, in the context of NLMPs and
effectivity functions, this demands further investigations into the
way subsystems integrate into this coalgebraic context. 

\subsection{Cospans For Finitely Generated Effectivity Functions}
\label{sec:cosp-finit-gener}

We will define behavioral equivalence and coalgebraic bisimulations
now and investigate their relationship for finitely generated
effectivity functions.

\BeginDefinition{beh-equiv-eff}
Let $P: S\eTrans T$ and $Q: X\eTrans Y$ be stochastic effectivity
functions.
\begin{itemize}
\item Call $P$ and $Q$ \emph{behaviorally equivalent} iff there exists
  a mediating effectivity function $M$ and surjective morphisms $
P \stackrel{(f, g)}{\longrightarrow} M \stackrel{(j, \ell)}{\longleftarrow} Q.
$
\item Call $P$ and $Q$ \emph{bisimilar} iff there exists a  mediating effectivity function $M$ and morphisms $
P \stackrel{(f, g)}{\longleftarrow} M \stackrel{(j, \ell)}{\longrightarrow} Q.
$
\end{itemize}
\EndDefinition
Hence behavioral equivalence is given through a co-span of morphisms,
bisimilarity by a span, as tradition demands. We will focus now on behavioral equivalent
effectivity functions $P$ and $Q$ for which the respective domains and
ranges are identical. It will be shown that we can find for finitely
supported effectivity functions bisimilar, albeit closely related functions,
which are defined on a subsystem. Finitely supported functions live on
a finite support set, which we show to be given by a countable set
of stochastic relations; this can be considered to be a version of the
notion of \emph{compactly generated} for effectivity functions.

We assume in this subsection that all spaces are standard Borel.

\paragraph{Preliminary Considerations.}
\label{sec:prel-cons}

Let $f: S\to U$ and $g: T\to U$ be both measurable and surjective
maps, and consider 
\begin{equation}
\label{def:w}
W := \{\langle s, t\rangle \mid f(s) = g(t)\} = \InvBild{(f\times g)}{\Delta_{U}}
\end{equation}
with $\Delta_{U}:= \{\langle u, u\rangle \mid u\in U\}$ as the
diagonal of $U$; intuitively, $\Delta_{U}$ is a true copy of $U$,
slightly tilted. Because $\Delta_{U}$ is closed in the underlying
topological space, $W$ is
measurable. Under the projections $\pi_{1}, \pi_{2}: \langle u, u\rangle
\mapsto u$, with inverse $d$, $\Delta_{U}$ is homeomorphic to $U$, so that
$\Borel{\Delta_{U}} = \InvBild{d}{\Borel{U}}$, consequently,
$w(\Delta_{U}) =
\InvBild{(\SubProbSenza{d})}{\Borel{\SubProb{U}}}$. We
have also
\begin{align*}
(f\circ \pi_{1, S})(s, t) & = (\pi_{1, U}\circ f\times g)(s, t),\\
(g\circ \pi_{2, T})(s, t) & = (\pi_{2, U}\circ f\times g)(s, t)
\end{align*}
for $\langle s, t\rangle\in S\times T$.  

Endow $W$ with the trace of the $\sigma$-algebra
$\Sigma_{f}\otimes\Sigma_{g}$, hence 
\begin{align*}
  \sigalg(W) & = \InvBild{(f\times g)}{\Borel{S\times S}}\cap
  \InvBild{(f\times g)}{\Delta_{U}}\\
& = \InvBild{(f\times g)}{\sigalg(\Delta_{U})}
\end{align*}

\paragraph{Finitely Supported Functions.}
\label{sec:finit-supp-funct}

A finitely supported effectivity function on $S$ can be represented through stochastic relations, which are obtained as measurable selections.
\BeginProposition{exists-stoch-rel}
Let $P:S \eTrans S$ be finitely supported by $E$, then there exists a countable set
$\mathcal{K}$ of stochastic relations $S\Trans S$ such that $E(s) =
\mathcal{K}(s) := \{K(s) \mid K\in \mathcal{K}\}$. 
\EndProposition

\BeginProof
Since $S$ is standard Borel, we find a Polish topology which generates
the $\sigma$-algebra; assume that $S$ is endowed with this topology,
then $\SubProb{S}$ is a Polish space as well under the weak
topology. We take this topology. 
Let $G\subseteq \SubProb{S}$ be open, then 

\begin{equation*}
  \{s\in S \mid E(s)\cap G \not= \emptyset\} = \bigcup_{e\in E}\{s\in
  S\mid e(s)\in G\}\in \Borel{S}.
\end{equation*}
Thus there exists a Castaing representation
$\Folge{K}$ for $E$ by the Selection Theorem~\ref{Himmelberg}. Because $E$ is finite, we have 
$
E(s) = \{K_{n}(s) \mid n\in \Nat\}.
$
\EndProof

Note that although $\Folge{K}$ is a countable sequence of stochastic
relations, the representation $E(s) = \{K_{n}(s) \mid n\in \Nat\}$
together with the finiteness of $E(s)$ implies that the set
$\{K_{n}(s) \mid n\in \Nat\}$ is finite for each state $s$, but that we
cannot conclude that $\Folge{K}$ is a finite sequence.

Now assume that the finitely supported effectivity functions $P$ and
$Q$ are behaviorally equivalent, so that we have a cospan
$P\stackrel{f}{\rightarrow} M \stackrel{g}{\leftarrow} Q$
and $f$ and $g$ surjective
morphisms. The co-span expands to 
\begin{equation*}
\xymatrix{
S\ar[rr]^{f}\ar[d]_{P} && U\ar[d]_{M} && T\ar[d]^{Q}\ar[ll]_{g}\\
\Vau{S}\ar[rr]_{\VauSenza f} && \Vau{U} && \Vau{T}\ar[ll]^{\VauSenza g}
}
\end{equation*}
Let $\mathcal{K}$ and $\mathcal{L}$ be the countable sets of
stochastic relations associated with $P$ resp. $Q$ according to
Proposition~\ref{exists-stoch-rel}. We will show now that $M$ is
finitely supported, and that the pointwise images of ${\cal K}$ and ${\cal L}$
under $\SubProbSenza{f}$ resp. $\SubProbSenza{g}$ coincide whenever
the image of $f$ and $g$ are the same. 

\BeginProposition{also-finitely-supported}
In the notation above, $M$ is finitely supported, and $\Bild{(\SubProbSenza f)}{\mathcal{K}(s)}=\Bild{(\SubProbSenza g)}{\mathcal{L}(t)}$, whenever $f(s) = g(t)$.
\EndProposition

\BeginProof
Because $f: P\to M$ is a morphism, we have $ G\in M(f(s)) $ iff $
\InvBild{(\SubProbSenza f)}{G}\in P(s), $ which in turn is equivalent
to $ \mathcal{K}(s) \subseteq\InvBild{(\SubProbSenza f)}{G}.  $ Hence
$ \Bild{(\SubProbSenza f)}{\mathcal{K}(s)}\in M(f(s)).  $ Similarly,
we conclude $ \Bild{(\SubProbSenza g)}{\mathcal{L}(t)}\in M(g(t)).  $
Because $f$ and $g$ both are onto, we find for each $u\in U$ some
$s\in S$ and $t\in T$ with $f(s) = u = g(t)$. From this we conclude
that $M(u)$ is supported both by $\Bild{(\SubProbSenza
  f)}{\mathcal{K}(s)}$ and by $\Bild{(\SubProbSenza
  g)}{\mathcal{L}(t)}$.
\EndProof

Before defining an effectivity function, we briefly investigate the
$\SubProbSenza$-image of the projections $\pi_{S}$ and $\pi_{T}$. We
show that they induce surjective maps, so that we can be sure that
each element of the corresponding range occurs as an image. This is of
technical use when investigating the system dynamics for the mediator.

\BeginLemma{gen-3}
$\SubProb{\pi_{S}}: \SubProbSenza(W, \Sigma_{f\times g}\cap W) \to
\SubProbSenza(S, \Sigma_{f})$ and
$\SubProb{\pi_{T}}: \SubProbSenza(W, \Sigma_{f\times g}\cap W) \to
\SubProbSenza(T, \Sigma_{f})$
are both onto.
\EndLemma

\BeginProof
We have a look at $\pi_{S}$ only, the argumentation is symmetric for
its step twin $\pi_{T}$. 
Define $\psi: \sigalg(Y)\to \Sigma_{f}$ through $D\mapsto
\InvBild{f}{D}$, then $\psi$ induces a bijection $\Psi:
\SubProb{Y}\to \SubProb{S, \Sigma_{f}}$ by
Lemma~\ref{gen-1}. Given $\mu\in\SubProb{S, \Sigma_{f}}$, define $$\nu
:= (\SubProbSenza f)(\mu),$$ hence $\nu\in\SubProb{U}$. Because
$e_{1}\circ f\times g: W\to U$ is onto, we find $\nu'\in\SubProb{W,
  \Sigma_{f\times g}\cap W}$ with $$\nu = \SubProb{e_{1}\circ f\times
  g}(\nu').$$ But $$e_{1}\circ (f\times g) = f\circ \pi_{S}$$ on $W$, so
that $$\nu = (\SubProb{f}\circ \SubProb{\pi_{S}})(\nu') =
(\SubProbSenza 
f)(\mu).$$ Hence $$
\mu = \Psi\bigl(\SubProb{f}\circ
\SubProb{\pi_{S}}\bigr)(\nu') = (\SubProbSenza \pi_{S})(\nu').
$$
\EndProof

\paragraph{Constructing the Mediator.}
\label{sec:constr-medi}

Now define $W$ as in (\ref{def:w}) and endow it with
$\InvBild{(f\times g)}{\Borel{\Delta_U}}$ as a $\sigma$-algebra. We
know from Proposition~\ref{also-finitely-supported} that $\InvBild{(\SubProbSenza \pi_S)}{\mathcal{K}(s)}\subseteq Z $
iff $\InvBild{(\SubProbSenza
  \pi_T)}{\mathcal{L}(t)}\subseteq Z$ for $f(s) = g(t)$. This suggests
the following definition for the dynamics for $\langle s, t\rangle\in W$.
\begin{equation}
\label{def:tau}
\tau(s, t) := \fil{\bigl(\InvBild{(\SubProbSenza
    \pi_S)}{\mathcal{K}(s)}\bigr)}
\big(= \fil{\bigl(\InvBild{(\SubProbSenza \pi_S)}{\mathcal{L}(t)}\bigr)}\big)
\end{equation}
Hence $\tau: W\to \Vau{W}$ renders this diagram commutative:
\begin{equation*}
\xymatrix{
S\ar[d]_{P_f} && W\ar[d]_\tau\ar[rr]^{\pi_S}\ar[ll]_{\pi_T} && T\ar[d]^{Q_g}\\
\Vau{S_f} && \Vau{W}\ar[rr]_{\VauSenza \pi_S}\ar[ll]^{\VauSenza \pi_T} && \Vau{T_g}
}
\end{equation*}
In fact, if $\langle s, t\rangle\in W$, we have 
\begin{align*}
\InvBild{(\SubProbSenza \pi_S)}{Z}\in \tau(s, t)
& \Leftrightarrow
\InvBild{(\SubProbSenza \pi_S)}{\mathcal{K}(s)}\subseteq\InvBild{(\SubProbSenza \pi_S)}{Z}\\
& \Leftrightarrow
\mathcal{K}(s)\subseteq Z
\end{align*}
The last conclusion follows from the observation that since $\pi_S$ is
onto and continuous, $\SubProbSenza \pi_S$ is onto as well, so that
$(\SubProbSenza \pi_S)^{-1}$ is injective as a set valued map.

Hence we see in a similar way, for $\langle s, t\rangle\in W$
\begin{align*}
P_f(s) & = \{ Z\in \Borel{\SubProbSenza W}| \InvBild{(\SubProbSenza \pi_S)}{Z}\in\tau(s, t)\},\\
Q_g(t) & = \{ Z\in \Borel{\SubProbSenza W}| \InvBild{(\SubProbSenza \pi_S)}{Z}\in\tau(s, t)\}.
\end{align*}

\BeginProposition{tau-is-t-measurable}
$\tau$ is t-measurable.
\EndProposition

Before we are in a position to establish Proposition~\ref{tau-is-t-measurable}, we need an auxiliary statement.
\BeginLemma{l-1}
Define $e_i: \Delta_U \to U$ as the $i$-th projection. If $H\in w(W)$, then there exists $H'\in w(U)$ such that $H = \InvBild{\bigl(\SubProbSenza (e_1\circ f\times g)\bigr)}{H'}$.
\EndLemma

\BeginProof
Let $ \mathcal{H} $ be the set of all $H\in w(W)$ for
which the assertion is true. Then $\mathcal{H}$ is a $\sigma$-algebra,
and $\basS{W}{D}{\bowtie q}\in\mathcal{H}$, whenever
$D\in\Sigma_{f\times g}\cap W$. In fact, $D$ can be written as $D =
\InvBild{(f\times g)}{D_1}$ for some $D_1\in\sigalg(\Delta_U)$, and
$D_1 = \InvBild{e_1}{D_0}$ for some $D_0\in\Borel{U}$. Hence $D =
\InvBild{(e_1\circ f\times g)}{D_0}$, from which\begin{equation*}
\basS{W}{D}{\bowtie q} = \InvBild{\bigl(\SubProbSenza (e_1\circ f\times g)\bigr)}{\beta_U(D_0, \bowtie q)}
\end{equation*}
follows. Thus $\mathcal{H}$ contains the generator for $w(W)$, from which the assertion follows.
\EndProof

This yields as an immediate consequence the following observation.

\BeginCorollary{c-l-1}
If $H\in \sigalg\bigl(\SubProb{W}\otimes[0, 1]\bigr)$, then there exists $H'\in \sigalg\bigl(\SubProb{U}\otimes[0, 1]\bigr)$ such that $H = \InvBild{\bigl(\SubProbSenza (e_1\circ f\times g)\times id_{[0, 1]}\bigr)}{H'}$.
\EndCorollary

This Corollary will permit this proof strategy: if we want to establish a
property for some measurable $H\subseteq \SubProb{W}\otimes[0, 1]$, we
investigate $H'$ instead, establish a suitably modified property for
$H'$ and prove things for $H'$. But let us have a look at the proof.

\BeginProof
Again, the set for which the assertion is true is a $\sigma$-algebra,
which contains by Lemma~\ref{l-1} all measurable rectangles $H_1\times
V$ with $H_1\in w(W)$ and $V\in\Borel{[0, 1]}$.
\EndProof

\BeginProof (of Proposition~\ref{tau-is-t-measurable})
1.
Since $\tau(s, t) = \fil{\bigl(\InvBild{(\SubProbSenza \pi_S)}{\mathcal{K}(s)}\bigr)}$ for $\langle s, t\rangle\in W$, we have to show that 
\begin{equation*}
T_H := \{\langle s, t, q\rangle \mid  H_q\in \tau(s, t)\}
\end{equation*}
is a member of $\sigalg(W\otimes[0, 1])$, whenever $H\subseteq \SubProb{W}\times[0, 1]$ is measurable. Thus we have to show that  
$
T_H = \{\langle s, t, q\rangle\in W\times[0, 1] |\InvBild{(\SubProbSenza \pi_S)}{\mathcal{K}(s)}\subseteq H_q\}\in\Borel{W\otimes[0, 1]}.
$

2.  
Now Corollary~\ref{c-l-1} kicks in. For $H$ there exists $G\in\sigalg\bigl(\SubProb{U}\otimes[0, 1]\bigr)$ such
that $H_q = \InvBild{\bigl(\SubProbSenza(e_1\circ f\times
  g)\bigr)}{G_q}$. 

We claim 
\begin{equation*}
\InvBild{(\SubProbSenza \pi_S)}{\mathcal{K}(s)}\subseteq H_q
\Leftrightarrow
\mathcal{K}(s)\subseteq 
\InvBild{(\SubProb{f}\times id_{[0, 1]})}{G}_q  
\end{equation*}

``$\Leftarrow$'': 
Note that $f\circ \pi_{S} = e_{1}\circ f\times g$ on $W$, hence
\begin{align*}
  \InvBild{(\SubProbSenza \pi_{S})}{\mathcal{K}(s)} 
& \subseteq
\InvBild{(\SubProbSenza \pi_{S})}{\InvBild{(\SubProb{f}\times id_{[0, 1]})}{G}_q } \\
& =
  \InvBild{(\SubProbSenza \pi_{S})}{\InvBild{(\SubProbSenza
      f)}{G_{q}}}\\
& = 
\InvBild{(\SubProb{f\circ \pi_{S}})}{G_{q}}\\
& =
\InvBild{(\SubProb{e_{1}\circ f\times g})}{G_{q}}\\
& =
H_{q}
\end{align*}
``$\Rightarrow$'': 
Because $\SubProbSenza \pi_{s}$ is onto by Lemma~\ref{gen-3}, we infer
from $\InvBild{(\SubProbSenza \pi_S)}{\mathcal{K}(s)}\subseteq H_q$
that $\mathcal{K}(s)\subseteq \Bild{(\SubProbSenza \pi_{S})}{H_{q}}$. 
Now let $\mu\in \Bild{(\SubProbSenza \pi_{S})}{H_{q}} = \Bild{(\SubProbSenza \pi_S)}{\InvBild{\bigl(\SubProbSenza(e_1\circ f\times
    g)\bigr)}{G_q}}$, then there exists $\nu\in
\InvBild{\bigl(\SubProbSenza(e_1\circ f\times g)\bigr)}{G_q}$ with
$\mu = (\SubProbSenza \pi_S)(\nu)$, hence
\begin{equation*}
(\SubProbSenza f)(\mu) = (\SubProbSenza f\circ \pi_S)(\mu) = (\SubProbSenza(e_1\circ f\times g))(\nu) \in G_q.
\end{equation*}
3.
Hence we have 
$
T_H = \{\langle s, t, q\rangle \in W\times [0, 1]\mid \mathcal{K}(s) \subseteq \InvBild{(\SubProb{f}\times id_{[0, 1]})}{G}_q\},
$
which is a measurable subset of $W\times[0, 1]$. 
\EndProof

Concluding, we have shown

\BeginProposition{bisim-exists}
Let $P: S\eTrans S$ and $Q: T\eTrans T$ be finitely supported and
behaviorally equivalent stochastic effectivity functions,
and assume that $S$, $T$ and the mediator's state space are standard
Borel spaces. Then there exist subsystems ${\cal C}_{P}$ and ${\cal
  C}_{Q}$ of $P$ resp.\ $Q$ and effectivity functions $P_{s}: S\eTrans
(S, {\cal C}_{P})$ and $Q_{s}: T\eTrans (T, {\cal C}_{Q})$ such that
$P_{s}$ and $Q_{s}$ are bisimilar. \qed
\EndProposition

Let us briefly look back and see what we did, and how we did it. We
identified subsystems for the given morphisms. On the equalizer $W$ of
these morphisms, we constructed a Borel structure, and from this
measurable space we obtained a stochastic effectivity function. So far
the proof resembles the classic proof for the existence of
bisimulations for set based systems. The complications arise when
having to establish that we have here a span of morphisms, in part
because the measurable structure on the subprobabilities of $W$ is
given by appealing to the measurable structure on the diagonal of the
target space for the co-span, from which we started, through the given
morphisms and through projections. Investigating this structure, which
arises through these delegations (as an object oriented programmer
would say) and which requires the maps and their
images under $\SubProbSenza$ as handles, renders the proof somewhat
involved.

One might be tempted now to capitalize on this result for establishing the existence
of a bisimulation for stochastic relations. This approach would rest
on the observation that each stochastic relation $K: S\Trans T$ for
measurable spaces $S$ and $T$ generates an effectivity function
$P_{K}$ upon setting $P_{K}(s) := \{A\in w(S)\mid
K(s)\in A\}$. Hence $P_{K}$ is clearly finitely supported. Since each
morphism for stochastic relations yields a morphism for the associated
effectivity function, the result above applies to stochastic relations
$K: S\Trans S$ and $L: T\Trans T$, and one finds a mediating
effectivity function for $P_{K, f}$ and $Q_{L,
  g}$. This effectivity function is finitely supported, but not by the
associated Borel sets. It is rather defined on a sub-$\sigma$-algebra
of the Borel sets of $W$, the space we have identified here. It
cannot be concluded right away, however, that the stochastic relation which
underlies the mediating function is also a stochastic relation for the
Borel sets of the underlying Polish or analytic spaces; looking at the
proofs establishing the existence of a semi-pullback~\cite{Edalat,EED-CongBisim}, it turns out
that it is this property which is the crucial one for stochastic relations. 

\subsection{Event Bisimulations Revisited}
\label{sec:event-bisimulations}
Since both event bisimulations and subsystems are given by
sub-$\sigma$-algebras, it is natural to  inquire about the connections
between these two notions, which adapt the remarks made above
concerning the relationships between subsystems and event
bisimulations for stochastic relations. 

In first place, we need to make some detailed observations concerning
the definition of event bisimulations. Fix a nondeterministic kernel $\ka :  S\to \weak(S)$. We have
defined an event bisimulation to be any sub-$\sigma$-algebra $\calC$ of
$\sigalg(S)$ such that
$\ka:(S,\calC)\to(\weak(S),\hit(\weak(\calC)))$ is measurable. This
means that for all $G\in \weak(\calC)$, the preimage
\begin{equation*}
\InvBild{\ka}{\hitt_G} = \{s\in S \st \ka(s) \cap G \neq \emptyset\}
\end{equation*}
should belong to $\calC$. But we have to exercise some care here:
although any measure $\mu$ in $\subp(S) =\subp(S,\sigalg(S))$ assigns
also values to elements of ${\cal C}$, it is actually not a member of
the space $\subp(S,\calC)$. We should restrict its domain by using the
map $\subp \iotac:\subp(S,\sigalg(S))\to\subp(S,\calC)$. So we finally
obtain the following more precise formulation of event bisimulation
on  $\ka :    S\to \weak(S)$:
any $\calC$ such that for all $G\in \weak(\calC)$,
\begin{equation*}
\ka^{-1}[\hitt_{(\subp \iotac)^{-1}[G]}] = \{s\in S \st \ka(s)
\cap{(\subp \iotac)^{-1}[G]} \neq \emptyset\} \in{\cal C}.
\end{equation*}
Finally, by considering  Lemma~\ref{l:H-meas_subseteq}, this is
equivalent to having
\begin{equation}\label{eq:12}
  \{s\in S \st \ka(s)\subseteq(\subp\iotac)^{-1}[G]\}\in\calC
\end{equation}
for all $G\in\weak(\calC)$.

As for stochastic relations, subsystems induce event
bisimulations. This is fairly straightforward.

\BeginProposition{is-event-bisim}
  Let $\calC$ be a subsystem of $\fil{\ka}:S\to\weak(S)$. Then $\calC$ is
  an event bisimulation of $\ka$.  
\EndProposition
\begin{proof}
  The hypothesis tells us that for all
  $H\in\sigalg(\subp(S,\calC)\otimes [0,1])$, we have
  \[    \{\<s,q\> \st H_q\in (\fil{\ka})_\calC(s)\} \in
  \sigalg((S,\calC)\otimes [0,1]) \]
  that is,
  \begin{equation}\label{eq:11}
    \{\<s,q\> \st (\subp\iotac)^{-1}[H_q]\in \fil{\ka(s)}\} \in
    \sigalg((S,\calC)\otimes [0,1])
  \end{equation}

  We aim at showing that (\ref{eq:12}) holds for all $G\in\weak(\calC)$.

  In order to do this, take $H:=G\times[0,1]$ in (\ref{eq:11}) and
  calculate:
  \begin{align*}
    \{\<s,q\> \st (\subp\iotac)^{-1}[H_q]\in \fil{\ka(s)}\} &= \{\<s,q\>
    \st (\subp\iotac)^{-1}[G]\in \fil{\ka(s)}\}\\
    &= \{\<s,q\>
    \st \ka(s)\subseteq (\subp\iotac)^{-1}[G]\}\\
    &= \{s \st \ka(s)\subseteq (\subp\iotac)^{-1}[G]\} \times [0,1]    
  \end{align*}
  Now the set $\{s \st \ka(s)\subseteq (\subp\iotac)^{-1}[G]\}$ is a
  cut of a set in $\sigalg((S,\calC)\otimes [0,1]) =
  \calC\otimes\B([0,1])$, and hence it belongs to $\calC$.
\end{proof}

We will need for the converse to strengthen our hypothesis by
following the ideas put forward in
Proposition~\ref{pr:parametrized-hit}, making the parameter from $[0,
1]$ explicit. It works like this: 
Let $\ka:S\to\weak(S)$ be a generating
kernel. We will show that parametrized  event
bisimulations, appropriately defined for the extended system with base
space $S\times [0,1]$, are exactly the subsystems of the effectivity function
based on $\ka$. 

Formally, a \emph{parametrized event  bisimulation on
  $\ka:S\to\weak(S)$}  is a sub-\sig-algebra of $\sigalg(S)$ such that 
$\ka\times id_{[0, 1]}:S\times [0,1]\to \weak(S)\times
[0,1]$ is
$(\calC\otimes\B([0,1]))$-$\hit(\weak(\calC)\otimes\B([0,1]))$ measurable: for all
$G\in\weak(S,\calC)\otimes\B([0,1])$, 
\begin{equation*}
  \{\<s,q\>\in S \times [0,1]\st (\ka\times id_{[0,1]})(s,q)\subseteq  (\subp\iotac\times
  id_{[0,1]})^{-1}[G]\} \in \calC\otimes\B([0,1]),
\end{equation*}
compare this with (\ref{eq:12}).
\begin{prop}
  For every generating
  kernel $\ka:S\to\weak(S)$, the parametrized event  bisimulations on
  $\ka$ are exactly the subsystems of $\fil{\ka}$.
\end{prop}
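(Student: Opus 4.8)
The plan is to prove the equivalence by \emph{unwinding both notions to a common condition} on $\calC$, with no analytic work beyond bookkeeping; this is the relativized counterpart of the unparametrized equivalence already recorded in Proposition~\ref{pr:parametrized-hit}. The standing hypothesis that $\ka$ is a generating kernel is used only to guarantee that $\fil{\ka}$ is a genuine effectivity function, so that ``subsystem of $\fil{\ka}$'' is meaningful; it plays no role in the core computation.

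First I would treat the subsystem side. By Definition~\ref{subsystem} together with the forced shape of the induced map (Definition~\ref{morph-eff-fnct} applied to $\iotac$, the inclusion $(S,\sigalg(S))\to(S,\calC)$), the sub-$\sigma$-algebra $\calC$ is a subsystem of $\fil{\ka}$ exactly when the map
\[
(\fil{\ka})_\calC(s) = \{D\in\weak(\calC) \st \InvBild{(\subp\iotac)}{D}\in \fil{\ka(s)}\}
\]
is t-measurable. Invoking the defining property of the filter, namely $D\in\fil{\ka(s)}\iff \ka(s)\subseteq D$, the membership $H_q\in(\fil{\ka})_\calC(s)$ rewrites as $\ka(s)\subseteq \InvBild{(\subp\iotac)}{H_q}$. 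Hence t-measurability of $(\fil{\ka})_\calC$ becomes: for every $H\in\weak(\calC)\otimes\B([0,1])$,
\[
\{\<s,q\> \st \ka(s)\subseteq \InvBild{(\subp\iotac)}{H_q}\}\in \calC\otimes\B([0,1]).
\]

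Next I would unwind the parametrized event bisimulation condition, which—already phrased through $\subseteq$ by Lemma~\ref{l:H-meas_subseteq}—requires that for all $G\in\weak(\calC)\otimes\B([0,1])$,
\[
\{\<s,q\> \st (\ka\times id_{[0,1]})(s,q)\subseteq \InvBild{(\subp\iotac\times id_{[0,1]})}{G}\}\in \calC\otimes\B([0,1]).
\]
Since $(\ka\times id_{[0,1]})(s,q) = \ka(s)\times\{q\}$, the displayed inclusion holds iff $\<\subp\iotac(\mu),q\>\in G$ for every $\mu\in\ka(s)$, i.e.\ iff $\ka(s)\subseteq \InvBild{(\subp\iotac)}{G_q}$, where $G_q$ is the horizontal cut of $G\subseteq\SubProb{S,\calC}\times[0,1]$. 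Thus the event and its index family are term-for-term identical to those on the subsystem side (take $H=G$), and the two characterizations coincide. Consequently $\calC$ is a subsystem of $\fil{\ka}$ if and only if it is a parametrized event bisimulation on $\ka$, which is the assertion.

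I do not expect any real obstacle here: the genuine content is merely that filter membership is reverse inclusion, exactly as exploited in Proposition~\ref{pr:parametrized-hit}. The only points demanding care—and they are purely notational—are the identifications $\weak(S,\calC)=\weak(\calC)$ and $\subp\iotac=\SubProbSenza\iotac$, the correct reading of the horizontal cut $G_q$, and the routine computation of the preimage under the product map $\subp\iotac\times id_{[0,1]}$.
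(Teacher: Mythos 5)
Your proposal is correct and follows essentially the same route as the paper's proof: both unwind the subsystem condition for $(\fil{\ka})_\calC$ via the forced form of the induced map, replace filter membership $H_q\in(\fil{\ka})_\calC(s)$ by the inclusion $\ka(s)\subseteq\InvBild{(\subp\iotac)}{H_q}$, and then make the cut component explicit to see that this is verbatim the parametrized event bisimulation condition. The only cosmetic difference is that you display the two unwound conditions side by side, while the paper runs the same chain of rewrites as a single sequence of equivalent sets.
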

\begin{proof}
  Let $\calC$ be a parametrized event bisimulation. By recalling
  (\ref{eq:11}) in the proof of the previous proposition, we would
  like to show that
  \begin{equation*}
    \{\<s,q\> \st (\subp\iotac)^{-1}[G_q]\in \fil{\ka(s)}\} \in
    \sigalg((S,\calC)\otimes [0,1]).
  \end{equation*}
  for all $G\in\sigalg(\subp(S,\calC)\otimes [0,1])$. By definition of $\fil{\cdot}$, we may write this set as
  \[  \{\<s,q\> \st \ka(s)\subseteq (\subp\iotac)^{-1}[G_q]\}.\]
  Next make explicit the cut component $q$:
  \[ \{\<s,q\> \st \ka(s)\times \{q\}\subseteq  (\subp\iotac\times
  id_{[0,1]})^{-1}[G]\}.\]
  That is, we need
  \[ \{\<s,q\> \st (\ka\times id_{[0,1]})(s,q)\subseteq  (\subp\iotac\times
  id_{[0,1]})^{-1}[G]\} \in \calC\otimes\B([0,1]).\]
  But this last condition is exactly the definition of parametrized
  event bisimulation, and therefore the two notions are equivalent.
\end{proof}
To finish this section, we will show that in a finitary setting, the relational approach to
bisimilarity of effectivity functions is compatible to that of
subsystems, by showing that the 
relation of state bisimilarity induces a subsystem in a natural
way. We actually obtain a stronger result: every smooth bisimulation
induces a subsystem, the result on finitary effectivity functions
follows as a corollary. 

A technical lemma is required first.
\BeginLemma{lem:bisim-subsys}
  Let $P:S\eTrans S$ an effectivity function, $R$ a state bisimulation on $P$, and
  $\calC:=\Sigma_{R}$. 
  Then  for given
  $H\in\sigalg(\subp(S,\calC)\otimes[0,1])$, 
assume $H_{q}\in P_\calC(s)$, and let $\isEquiv{s}{t}{R}$. Then $H_{q}\in
P_\calC(t)$. 
\EndLemma

\begin{proof}
  Since $R$ is a bisimulation, if $s\mathrel{R} t$ we have
  \begin{equation*}
    \forall G\in P(s)\,\exists E\in P(t):\forall \nu \in E \,\exists
    \mu \in G\; (\mu \mathrel{\bar R} \nu).
  \end{equation*}
  By definition of $\iotac$,
  \begin{equation*}
    \forall G\in P(s)\,\exists E\in
    P(t):\subp\iotac[E]\subseteq\subp\iotac[G].
\end{equation*}
  Now we can take $E':=E\cup G$, and this set belongs to $P(t)$ by
  upper-closedness, and we obtain
  \begin{equation}\label{eq:8}
    \forall G\in P(s)\,\exists E'\in
    P(t):\subp\iotac[E']=\subp\iotac[G].
  \end{equation}
 This yields
  \begin{align*}
    A &:= \{\<s,q\> \st H_q\in P_\calC(s)\} \\ 
      &= \{\<s,q\> \st (\subp\iotac)^{-1}[H_q]\in P(s)\} &&
    \text{ definition of }P_\calC\\ 
      &= \{\<s,q\> \st \exists D\in P(s) :
    (\subp\iotac)^{-1}[D]\subseteq H_q\in P(s)\} 
  \end{align*}
  If  $\<s,q\>\in A$  there exists a $D$ as in
  the previous line. And if $ s\mathrel{R} t$, there must exist some
  $D'\in P(t)$ such that $\subp\iotac[D']=\subp\iotac[D]$ by (\ref{eq:8}). Hence
  $\<t,q\>\in A$.
\end{proof}

Now all preparations are finally in place for showing that bisimulation
equivalences induce subsystems, provided they are smooth. By using our
logical characterization for finitary effectivity functions we may also
prove that state bisimilarity induces a subsystem.
\begin{theorem}\label{t:smooth-bisim-subsys}
  Let $S$ be Polish,  $P:S\to\cones(S)$ an effectivity function, and $R$ a smooth state
  bisimulation equivalence on $P$. Then 
  $\Sigma_{R}$ is a  subsystem. In particular, if $P$ is finitary,
  then
  $\Sigma({\sbisim},\sigalg(S))$ is a  subsystem.
\end{theorem}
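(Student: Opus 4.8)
The plan is to reduce the whole statement to a single t-measurability check. Set $\calC := \Sigma_R$ and recall from the discussion following Definition~\ref{subsystem} the canonical candidate $P_\calC(s) := \{D\in w(\calC)\mid \InvBild{(\SubProbSenza i_\calC)}{D}\in P(s)\}$, where $i_\calC$ is the identity $S\to(S,\calC)$. With this definition the morphism equation for $i_\calC$ holds automatically, so $\calC$ will define a subsystem as soon as $P_\calC$ is itself an effectivity function, i.e.\ is t-measurable. Concretely, fixing $H\in w(\calC)\otimes\B([0,1])$ and writing $T_H := \{\langle s,q\rangle\mid H_q\in P_\calC(s)\}$, the theorem comes down to proving $T_H\in\calC\otimes\B([0,1])$.

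First I would show that $T_H$ is at least a Borel subset of $S\times[0,1]$. Put $H' := \InvBild{(\SubProbSenza i_\calC\times id_{[0, 1]})}{H}$; then $H'\in w(\sigalg(S))\otimes\B([0,1])$ because $\SubProbSenza i_\calC$ is measurable, and its horizontal cut is $H'_q = \InvBild{(\SubProbSenza i_\calC)}{H_q}$. By the definition of $P_\calC$ we then have $T_H = \{\langle s,q\rangle\mid H'_q\in P(s)\}$, and t-measurability of the original $P$ yields $T_H\in\sigalg(S)\otimes\B([0,1]) = \Borel{S\times[0,1]}$. Next, Lemma~\ref{lem:bisim-subsys} is exactly the assertion that $T_H$ is invariant under the product relation $R' := R\times id_{[0, 1]}$ on $S\times[0,1]$: if $\langle s,q\rangle\in T_H$ and $\isEquiv{s}{t}{R}$, then $\langle t,q\rangle\in T_H$.

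The remaining step, which I expect to be the crux, is to promote ``Borel and $R'$-invariant'' to ``lies in $\calC\otimes\B([0,1])$''. The key observation is that $R'$ is again smooth: if $\Folge{C}$ witnesses smoothness of $R$ and $\{I_m\mid m\in\Nat\}$ is a countable point-separating generator of $\B([0,1])$, then the countable Borel family $\{C_n\times[0,1]\mid n\in\Nat\}\cup\{S\times I_m\mid m\in\Nat\}$ has associated equivalence exactly $R'$, since two pairs agree on all of these sets iff their first coordinates are $R$-related and their second coordinates coincide. Applying equation~(\ref{eq:3}) to $R'$ on the standard Borel space $S\times[0,1]$ then gives $\Inv{R'}{\Borel{S\times[0,1]}} = \sigma(\{C_n\times[0,1]\}\cup\{S\times I_m\}) = \Sigma_R\otimes\B([0,1]) = \calC\otimes\B([0,1])$, the last identification being a routine generator computation. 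Since $T_H$ is Borel and $R'$-invariant, it belongs to this $\sigma$-algebra, which is precisely the t-measurability sought. The subtle point is that one cannot simply choose a surjective Borel map with kernel $R$ and invoke Lemma~\ref{is-borel-conseq-lem}, because the quotient of a merely smooth relation is analytic rather than Polish; routing the argument through the smoothness of $R'$ and~(\ref{eq:3}) avoids the surjectivity hypothesis altogether.

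For the ``in particular'' clause I would argue that, when $P$ is finitary, state bisimilarity is a smooth state bisimulation equivalence, so that the first part applies with $R = \sbisim$. Indeed, Lemma~\ref{l:sb-preserve-2L} gives $\sbisim\,\subseteq\,\eqlog$, since every state bisimulation respects $\two$-formulas, while the completeness theorem for finitary effectivity functions shows that $\eqlog$ is itself a state bisimulation, whence $\eqlog\,\subseteq\,\sbisim$ and thus $\sbisim\,=\,\eqlog$. Consequently $\sbisim$ is an equivalence relation that is a state bisimulation, and it is smooth by Corollary~\ref{cor:two-smooth}; the first part then shows that $\Sigma(\sbisim,\sigalg(S)) = \Sigma_{\sbisim}$ is a subsystem.
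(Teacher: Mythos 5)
Your proof is correct, and its skeleton coincides with the paper's own: the same canonical candidate $P_\calC$, the same reduction to t-measurability of $P_\calC$, invariance of $T_H$ via Lemma~\ref{lem:bisim-subsys}, and then the $\sigma$-algebra identity
$\Inv{R}{\B(S)}\otimes\B([0,1]) = \Inv{R\times\Delta_{[0,1]}}{\B(S)\otimes\B([0,1])}$.
The genuine difference is how this last, crucial identity is obtained: the paper simply cites it, namely \cite[Lemma 6.4.2]{Bogachev} via \cite[Proposition 2.12]{EED-Alg-Prop_effFncts}, whereas you prove it from scratch by observing that $R\times\Delta_{[0,1]}$ is itself a smooth equivalence relation, witnessed by the countable family $\{C_n\times[0,1]\}\cup\{S\times I_m\}$, and then applying the Blackwell--Mackey consequence~(\ref{eq:3}) on the standard Borel space $S\times[0,1]$ together with a routine generator computation for the product $\sigma$-algebra. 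What your route buys is self-containedness: the only nontrivial input is equation~(\ref{eq:3}), which the paper has already set up, so the external citation becomes unnecessary; what the paper's route buys is brevity and a pointer to a statement proved in greater generality elsewhere. You also make explicit two steps the paper leaves implicit: first, that $T_H$ is Borel in $S\times[0,1]$ (via t-measurability of $P$ applied to $H' = \InvBild{(\SubProbSenza i_\calC\times id_{[0,1]})}{H}$), without which ``invariant'' could not be upgraded to membership in $\calC\otimes\B([0,1])$; and second, for the ``in particular'' clause, that $\sbisim$ really is a state bisimulation \emph{equivalence} (because $\sbisim\,=\,\eqlog$ by the completeness theorem together with Lemma~\ref{l:sb-preserve-2L}), which is what licenses applying the first part at all, rather than only its smoothness as recorded in Corollary~\ref{c:bisimilarity-smooth}. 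Both of these are faithful reconstructions of what the paper's terse proof tacitly relies on.
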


\begin{proof}
1.
  We have to prove that $P_\calC$ is t-measurable. That means, for all
  $H\in\sigalg(\subp(S,\calC)\otimes[0,1])$, the set $A$ as defined
  for $H$ in the
  proof for Lemma~\ref{lem:bisim-subsys}
should belong to $\Inv{R}{\B(S)}\otimes\B([0,1])$.
  By Lemma~\ref{lem:bisim-subsys}, we have
  $A\in\Inv{R\times \Delta_{[0,1]}}{\B(S)\otimes\B([0,1])}$. But since
  $[0, 1]$ is a Polish space, and since $R$ is smooth, we obtain from~\cite[Lemma
  6.4.2]{Bogachev} via~\cite[Proposition 2.12]{EED-Alg-Prop_effFncts} that
  \begin{equation*}
 \Inv{R}{\B(S)}\otimes\B([0,1])=\Inv{R\times \Delta_{[0,1]}}{\B(S)\otimes\B([0,1])},
\end{equation*}

2.
  By   Corollary~\ref{c:bisimilarity-smooth} the relation $\sbisim$ of
  state bisimilarity on $P$ is smooth, hence
  Theorem~\ref{t:smooth-bisim-subsys} applies.
\end{proof}

\section{Conclusions \& Further Work}
\label{sec:conclusions--further-work}
We shed some light into the relationship of nondeterministic kernels
and stochastic effectivity functions, forging links between the
respective concepts of morphism and bisimulation. While
nondeterministic kernels appear as a natural generalization
of 
purely probabilistic systems, the properties proposed for effectivity functions
are useful for interpreting game logics.
Both are descendants of stochastic relations, and this common origin
is one reason for having a tighter bond between the finitely-generated
versions of each of them. Two of the main contributions of this work
concern the finitary setting, namely: the fact that image-finite
kernels always give rise to effectivity functions via the filter
construction, and that state bisimilarity on finitary effectivity
functions is characterized by the two-level modal logic $\two$.

We also started the coalgebraic study of nondeterministic kernels, by
translating the notion of morphism for effectivity functions to them,
thus obtaining a category of NLMPs. There is still much work to be
done in this direction, since we still only have some preliminary
approximations for  the candidate  endofunctor for the coalgebraic
structure. One pertinent observation is that if we obtain a
successful definition of this functor, it would most probably be a
contravariant one. This is not yet encompassed in the standard
theory of coalgebras, so a systematic study of \emph{contravariant
  coalgebras} might be a natural step to follow.

In the opposite way, the study of the translation from
nondeterministic kernels to effectivity functions made apparent the
need of different notions of morphism for the latter. The plain
definition of effectivity function morphism corresponds to bounded
morphisms for models of monotonic non-normal modal logic, as studied
by Hansen \cite{hansen2003monotonic}. Now, both our finitary
effectivity functions and those generated by a kernel are analogous to
\emph{core-complete} models, meaning that their value for each state
is a union of principal filters. In this context, strong morphisms
correspond to \emph{bounded core morphisms}.  Somehow related to this,
it is important to note that the working hypothesis used to prove a
Hennessy-Milner theorem for monotonic models (i.e., a logical
characterization of bisimilarity~\cite[Definition
4.30]{hansen2003monotonic}) is the same as ours of having a finitary
effectivity function. Further connections between the standard theory
of monotonic models and the present stochastic version should be
studied in the future.

We would like to see an interpretation of game logic without the
fairly strong condition of t-measurability, which, however, has turned
out to be technically necessary. 
One point of attack may be given by the
observation that certain crucial real functions that appear in the
study of composability of effectivity functions \cite[Section~3.2]{EED-Alg-Prop_effFncts} are  monotonically
decreasing and hence Borel measurable.
If it could
be verified that t-measurability can be deduced from some more
lightweight properties of nondeterministic kernels, the boundaries
between NLMPs and effectivity functions could be lowered, permitting
an easier exchange of properties. It seems necessary to start by
investigating the composition of two nondeterministic kernels; there
are natural definitions of this composition, stemming from the
discrete case, but  measurability obstacles preclude a direct
generalization.

Another topic of interest is the
relationship of the two-level logic $\two$ and game logic. The
interpretation of the ``upper level'' of $\two$ in terms of measures would
open up a lot of interesting perspectives on game logic, not least for
the exploration of bisimulations for probabilistic neighborhood models, about we do
not know too much even under the assumption of Polish or analytic
state spaces. Also, we will give an account for the relation between
$\two$ and the logic for NLMPs somewhere else. This may help in completing the picture in
this stochastic landscape.

Finally  a third area of further work includes a
deeper study of coalgebraic matters, in particular of expressivity
including bisimilarity. We believe that (at least under the assumption
of finitary effectivity functions) stronger results may be obtained.

\begin{ack}
  Most of the work was done while the second author visited the first
  author's group at Technische Universit\"at Dortmund. The authors want
  to thank DFG for funding this stay and making this cooperation
  possible; the second author wants to thank TU Dortmund staff for
  making his research stay both fruitful and highly enjoyable. Both
  authors had a lot of nonMarkovian fun with their joint research.
\end{ack}


\end{document}